\def\isarxivversion{1} %%% neurips version, we comment this line

\ifdefined\isarxivversion
\documentclass[11pt]{article}
\else
\documentclass{article}
\usepackage[final]{neurips_2019}
\fi

\usepackage[utf8]{inputenc} % allow utf-8 input
\usepackage[T1]{fontenc}    % use 8-bit T1 fonts
\usepackage{hyperref}       % hyperlinks
\usepackage{url}            % simple URL typesetting
\usepackage{booktabs}       % professional-quality tables
\usepackage{amsfonts}       % blackboard math symbols
\usepackage{nicefrac}       % compact symbols for 1/2, etc.
\usepackage{microtype}      % microtypography

\usepackage{amsmath}
\usepackage{amsthm}
\usepackage{amssymb}
\usepackage{algorithm}
\usepackage{subfig}
\usepackage{color}
\usepackage[english]{babel}
\usepackage{graphicx}
\usepackage{wrapfig,epsfig}
\usepackage{epstopdf}
\usepackage{url}
\usepackage{color}
\usepackage{epstopdf}
\usepackage{algpseudocode}
\usepackage{scrextend}
\usepackage[T1]{fontenc}
\usepackage{bbm}
\usepackage{comment}
\hypersetup{colorlinks=true,citecolor=blue,linkcolor=blue}

\graphicspath{{./figs/}}

\ifdefined\isarxivversion
\usepackage[margin=1in]{geometry}
\else 

\fi

\ifdefined\isarxivversion

\author{
	Zhao Song\thanks{\texttt{zhaosong@uw.edu}. University of Washington.}
	\and
	David P. Woodruff\thanks{\texttt{dwoodruf@cs.cmu.edu}. Carnegie Mellon University.}
	\and
	Peilin Zhong\thanks{\texttt{pz2225@columbia.edu}. Columbia University.}
}
\else

\author{
	Zhao Song\thanks{equal contribution.}\\
	University of Washington \\
	\texttt{magic.linuxkde@gmail.com}
	\And
	 David P. Woodruff\printfnsymbol{1}  \\
	Carnegie Mellon University\\
	\texttt{dwoodruf@cs.cmu.edu}
	\And
	Peilin Zhong\printfnsymbol{1} \\
	Columbia University\\
	\texttt{pz2225@columbia.edu}
}
\fi

\date{}

\newtheorem{theorem}{Theorem}[section]
\newtheorem{lemma}[theorem]{Lemma}
\newtheorem{definition}[theorem]{Definition}

\newtheorem{fact}[theorem]{Fact}

\newtheorem{claim}[theorem]{Claim}

\newtheorem{question}[theorem]{Question}

\newcommand{\wh}{\widehat}
\newcommand{\wt}{\widetilde}

\newcommand{\eps}{\epsilon}

\newcommand{\R}{\mathbb{R}}

\renewcommand{\varepsilon}{\epsilon}
\renewcommand{\tilde}{\wt}

\DeclareMathOperator*{\E}{{\bf {E}}}
\DeclareMathOperator*{\argmin}{argmin}

\DeclareMathOperator{\OPT}{OPT}

\DeclareMathOperator{\poly}{poly}

\DeclareMathOperator{\nnz}{nnz}

\DeclareMathOperator{\rank}{rank}

\DeclareMathOperator{\sign}{sign}

\DeclareMathOperator{\ati}{ati}
\DeclareMathOperator{\sym}{mon}
\DeclareMathOperator{\mon}{mon}
\DeclareMathOperator{\relu}{ReLU}
\DeclareMathOperator{\reg}{reg}
\DeclareMathOperator{\group}{group}

\makeatletter
\newcommand*{\RN}[1]{\expandafter\@slowromancap\romannumeral #1@}
\newcommand{\printfnsymbol}[1]{%
	\textsuperscript{\@fnsymbol{#1}}%
}
\makeatother

\ifdefined\isarxivversion

\title{Towards a Zero-One Law for Column Subset Selection\thanks{A preliminary version of this paper appears in Proceedings of Thirty-third Conference on Neural Information Processing Systems (NeurIPS 2019).}}

\else

\title{Towards a Zero-One Law for Column Subset Selection}

\fi

\begin{document}

\ifdefined\isarxivversion

\begin{titlepage}
\maketitle
\begin{abstract}
There are a number of approximation algorithms for NP-hard versions of low rank approximation, such as finding a rank-$k$ matrix $B$ minimizing the sum of absolute values of differences to a given $n$-by-$n$ matrix $A$, $\min_{\textrm{rank-}k~B}\|A-B\|_1$, or more generally finding a rank-$k$ matrix $B$ which minimizes the sum of $p$-th powers of absolute values of differences, $\min_{\textrm{rank-}k~B}\|A-B\|_p^p$. Many of these algorithms are linear time columns subset selection algorithms, 
returning a subset of $\poly(k \log n)$ columns whose cost is no more than a $\poly(k)$ factor larger than the cost of the best rank-$k$ matrix. 
The above error measures are special cases of the following general entrywise
low rank approximation problem: given an arbitrary function $g:\mathbb{R} \rightarrow \mathbb{R}_{\geq 0}$, find a rank-$k$ matrix $B$ which minimizes $\|A-B\|_g = \sum_{i,j}g(A_{i,j}-B_{i,j})$. A natural question is which functions $g$ admit efficient approximation algorithms? Indeed, this is a central question of recent work studying generalized low rank models. In this work we give approximation algorithms for {\it every} function $g$ which is approximately monotone and satisfies an approximate triangle inequality, and we show both of these conditions are necessary. Further, our algorithm is efficient if the function $g$ admits an efficient approximate regression algorithm. Our approximation algorithms handle functions which are not even scale-invariant, such as the Huber loss function, which we show have very different structural properties than $\ell_p$-norms, e.g., one can show the lack of scale-invariance causes any column subset selection algorithm to provably require a $\sqrt{\log n}$ factor larger number of columns than $\ell_p$-norms; nevertheless we design the first efficient column subset selection algorithms for such error measures.

\end{abstract}
\thispagestyle{empty}
\end{titlepage}

\else

\maketitle
\begin{abstract}

\end{abstract}

\fi

\section{Introduction}
A well-studied problem in machine learning and numerical linear algebra, with applications to recommendation systems, text mining, and computer vision, is that of computing a low-rank approximation of a matrix. Such approximations reveal low-dimensional structure, provide a compact way of storing a matrix, and can quickly be applied to a vector.

A commonly used version of the problem is to compute a near optimal low-rank approximation with respect to the Frobenius norm. That is, given an $n \times n$ input matrix $A$ and an accuracy parameter $\epsilon > 0$, output a rank-$k$ matrix $B$ with large probability so that 
$\|A-B\|_F^2 \leq (1+\epsilon) \|A-A_k\|_F^2,$ where for a matrix $C$, $\|C\|_F^2 = \sum_{i,j} C_{i,j}^2$ is its squared Frobenius norm, and $A_k =\textrm{argmin}_{\textrm{rank-}k~B}\|A-B\|_F$. $A_k$ can be computed exactly using the singular value decomposition (SVD), but takes $O(n^3)$ time in practice and $n^{\omega}$ time in theory, where $\omega \approx 2.373$ is the exponent of matrix multiplication \cite{s69,cw87,w12,lg14}.

S\'arlos \cite{s06} showed how to achieve the above guarantee with constant probability in $\tilde O(\nnz(A) \cdot k/\epsilon) + n \cdot \poly(k/\epsilon)$ time, where $\nnz(A)$ denotes the number of non-zero entries of $A$. This was improved in \cite{cw13,mm13,nn13,bdn15,c16} using sparse random projections in $O(\nnz(A)) + n \cdot \poly(k/\epsilon)$ time. Large sparse datasets in recommendation systems are common, such as the Bookcrossing ($100K \times 300K$ with $10^6$ observations) \cite{zmkl05} and Yelp datasets ($40K \times 10K$ with $10^5$ observations) \cite{y14}, and this is a substantial improvement over the SVD. 

%{\bf Robust Low Rank Approximation.} 
To understand the role of the Frobenius norm in the algorithms above, we recall a standard motivation for this error measure. 
Suppose one has $n$ data points in a $k$-dimensional subspace of $\mathbb{R}^d$, where $k \ll d$. We can write these points as the rows of an $n \times d$ matrix
$A^*$ which has rank $k$. The matrix $A^*$ is often called the {\it ground truth matrix}. In a number of settings,
due to measurement noise or other kinds of noise,  
we only observe the matrix $A=A^*+\Delta$, where each entry of the {\it noise matrix} $\Delta\in\mathbb{R}^{n\times n}$ is an i.i.d. 
random variable from a certain mean-zero noise distribution $\mathcal{D}$. 
One method for approximately recovering $A^*$ from $A$ is maximum likelihood estimation.
Here one tries to find a matrix $B$ maximizing the log-likelihood:
$\max_{\textrm{rank-}k~B} \sum_{i,j} \log p(A_{i,j}-B_{i,j}),$
where $p(\cdot)$ is the probability density function of the underlying noise distribution $\mathcal{D}.$
For example, when the noise distribution is Gaussian with mean zero and variance $\sigma^2$, denoted by $N(0,\sigma^2),$ 
then the optimization problem is
$
\max_{\textrm{rank-}k~B} \sum_{i,j} \left(\log(1/\sqrt{2\pi\sigma^2})-(A_{i,j}-B_{i,j})^2/(2\sigma^2)\right),
$
which is equivalent to solving the Frobenius norm loss low rank approximation problem defined above.

The Frobenius norm loss, while having nice statistical properties for Gaussian noise, is well-known to be sensitive to outliers. 
%\Peilin{
Applying the same maximum likelihood framework above to other kinds of noise distributions results in minimizing other kinds of
loss functions. 
In general, if the density function of the underlying noise $\mathcal{D}$ is 
$
p(z) = c\cdot e^{-g(z)},
$
where $c$ is a normalization constant, then the maximum likelihood estimation problem for this noise 
distribution becomes the following generalized entry-wise loss low rank approximation problem:
$
\min_{\textrm{rank-}k~B} \sum_{i,j} g(A_{i,j}-B_{i,j}) = \min_{\textrm{rank-}k~B}\|A-B\|_g,
$
which is a central topic of recent work on {\it generalized low-rank models} \cite{uhzb16}. 
For example, when the noise is Laplacian, the entrywise $\ell_1$ loss is the maximum likelihood estimation, which is 
also robust to sparse outliers. A natural setting is when the noise is a 
mixture of small Gaussian noise and sparse outliers; this noise distribution is referred to as the {\it Huber density}. 
In this case the Huber loss function gives the maximum likelihood estimate~\cite{uhzb16}, where the Huber function \cite{h64} is defined 
to be: $g(x) = x^2/(2 \tau)$ if $|x| < \tau/2$, and $g(x) = |x|-\tau/2$ if $|x| \geq \tau$. Another nice property of the Huber error measure is that it is differentiable everywhere, unlike the $\ell_1$-norm, yet still enjoys the robustness properties as one moves away from the origin, making it less sensitive to outliers than the $\ell_2$-norm.
There are many other kinds of loss functions, known as $M$-estimators~\cite{z97}, which are widely used as loss functions in robust statistics \cite{hrrs11}.

Although several specific cases have been studied, such as entry-wise $\ell_p$ loss \cite{clmw11,swz17,cgklrw17,bkw17,bbbklw17}, weighted entry-wise $\ell_2$ loss \cite{rsw16}, and cascaded $\ell_p(\ell_2)$ loss \cite{dvtv09,cw15focs}, the landscape of general entry-wise loss functions remains elusive. There are no results known for any loss function which is not scale-invariant, much less any kind of characterization of which loss functions admit efficient algorithms. This is despite the importance of these loss functions; we refer the reader to \cite{uhzb16} for a survey of generalized low rank models. This motivates the main question in our work:

\begin{question}[General Loss Functions]\label{que:lowrank}
For a given approximation factor $\alpha > 1$, which functions $g$ allow for efficient low-rank approximation algorithms? Formally, given an $n \times d$ matrix $A$, can we find a $\rank$-$k$ matrix $B$ for which $\|A-B\|_g \leq \alpha \min_{{\rank}-k~B'} \|A-B'\|_g$, where for a matrix $C$, $\|C\|_g = \sum_{i \in [n], j \in [d]} g(C_{i,j})$? What if we also allow $B$ to have rank $\poly(k \log n)$? 
\end{question}

For Question \ref{que:lowrank}, one has $g(x) = |x|^p$ for $p$-norms, and note the Huber loss function also fits into this framework. Allowing $B$ to have slightly larger rank than $k$, namely, $\poly(k \log n)$, is often sufficient for applications as it still allows for the space savings and computational gains outlined above. These are referred to as bicriteria approximations and are the focus of our work.

\nocite{swyzz19}

{\bf Notation.} Before we present our results, let us briefly introduce the notation.
For $n\in \mathbb{Z}_{\geq 0}$, let $[n]$ denote the set $\{1,2,\cdots,n\}$.
Let $A\in\mathbb{R}^{n\times m}$.
$A_i$ and $A^j$ denote the $i^{\text{th}}$ column and the $j^{\text{th}}$ row of $A$ respectively.
Let $P\subseteq [m], Q\subseteq[n]$.
$A_P$ denotes the matrix which is composed by the columns of $A$ with column indices in $P$.
Similarly, $A^Q$ denotes the matrix composed by the rows of $A$ with row indices in $Q$.
Let $S$ be a set and $s\in\mathbb{Z}_{\geq 0}$.
We use
$S\choose s$ to denote the set of all the size-$s$ subsets of $S$.

\subsection{Our Results}
We studied low rank approximation with respect to general error measures. 
Our algorithm
is a column subset selection algorithm, returning a small subset of columns which span a good low rank approximation.
Column subset selection has the benefit of preserving sparsity and interpretability, as described above.

We give a ``zero-one law'' for such column subset selection problems.
We describe two properties on the function $g$ that we need
to obtain our low rank approximation algorithms. 
We also show that if we are missing any one of the properties, then we can find an example function $g$ for which there is no good column subset selection (see Appendix~\ref{sec:necessity}).

Since we obtain column subset selection algorithms for a wide class
of functions, our algorithms must necessarily be bicriteria and have
approximation factor at least $\poly(k)$. Indeed, a special case of our
class of functions includes entrywise $\ell_1$-low rank approximation,
for which it was shown in Theorem G.27 of \cite{swz17}
that any subset of $\poly(k)$ columns incurs
an approximation error of at least $k^{\Omega(1)}$. We also show that
for the entrywise Huber-low rank approximation, already for
$k =1$, $\sqrt{\log n}$ columns
are needed to obtain any constant factor approximation, thus showing
that for some of the functions we consider, a dependence on $n$
in our column subset size is necessary.

We note that previously for almost all such functions,
it was not known how to obtain
any non-trivial approximation factor with any sublinear number of columns.

\subsubsection{A Zero-One Law}\label{sec:zeroonelaw}

We first state three general properties, the first two of which are structural properties and are necessary and sufficient for obtaining a good approximation from a small subset of columns. The third property is needed for efficient running time. 

{\bf Approximate triangle inequality.} 
For $t\in\mathbb{Z}_{>0}$, we say a function $g(x) : \R \rightarrow \R_{\geq 0}$ satisfies the ${\ati}_{g,t}$-approximate triangle inequality if for any $x_1, x_2, \cdots, x_t \in \R$, 
$
g \left( \sum x_i \right) \leq {\ati}_{g,t} \cdot \sum g( x_i ).
$

{\bf Monotone property.}
For any parameter $\sym_g \geq 1$, we say function $g(x) : \R \rightarrow \R_{\geq 0}$ is $\sym_g$-monotone if
for any $x,y\in\mathbb{R}$ with $0\leq |x|\leq |y|,$ we have $g(x)\leq \sym_{g}\cdot g(y).$

Many functions including most $M$-estimators~\cite{z97} and the quantile function~\cite{kb78} satisfy the above two properties.
 See Table~\ref{tab:M-estimators} for several examples.
\begin{table}[t]
	\caption{ Example functions satisfying both structural properties.}
	\label{tab:M-estimators}
	\begin{center}
		\begin{small}
		%	\begin{sc}
				\begin{tabular}{lccc}
					\toprule
					& $g(x)$ & $\ati_{g,t}$ & $\mon_g$  \\ \hline
					\small{\sc Huber} & \small{$\left\{\begin{array}{ll}x^2/2&|x|\leq \tau\\ \tau(|x|-\tau/2) & |x|>\tau\end{array}\right.$} & $O(t)$ & $1$ \\ \hline 
					 \small{\sc $\ell_p$ $(p\geq 1)$} & \small{$|x|^p/p$} &  $O(t^{p-1})$ & $1$ \\ \hline
					\small{\sc $\ell_1-\ell_2$} &  \small{$2(\sqrt{1+x^2/2}-1)$} & $O(t)$ & $1$  \\ \hline
					\small{\sc Geman-McClure} & $x^2/(2+2x^2)$ & $O(t)$ & $1$ \\ \hline
					\small{\sc ``Fair"} & \small{$\tau^2\left(|x|/\tau-\log(1+|x|/\tau)\right)$} & $O(t)$  & $1$\\ \hline
					\small{\sc Tukey} & \small{$\left\{\begin{array}{ll} \tau^2/6\cdot(1-(1-(x/\tau)^2)^3) &|x|\leq \tau\\ \tau^2/6 & |x|>\tau\end{array}\right.$} & $O(t)$ & $1$\\ \hline 
					\small{\sc Cauchy} & \small{$\tau^2/2\cdot \log(1+(x/\tau)^2)$} & $O(t)$ & $1$\\ \hline
					\small{\sc Quantile $(\tau\in(0,1))$}  & \small{$\left\{\begin{array}{ll} \tau x &x \geq 0\\ (\tau - 1)x & x<0\end{array}\right.$} & $1$ & $\max\left(\frac{\tau}{1-\tau},\frac{1-\tau}{\tau}\right)$\\ 
					\bottomrule
				\end{tabular}
		%	\end{sc}
		\end{small}
	\end{center}
	
\end{table}
We refer the reader to the supplementary, namely Appendix~\ref{sec:necessity}, for the necessity of these two properties. Our next property
is not structural, but rather states that if the loss function has an
efficient regression algorithm, then that suffices 
to efficiently find a small subset of columns spanning a good low
rank approximation.

{\bf Regression property.}
We say function $g(x) : \R \rightarrow \R_{\geq 0}$ has the $(\reg_{g,d},\mathcal{T}_{\reg,g,n,d,m})$-regression property if the following holds: given two matrices $A \in \R^{n \times d}$ and $B \in \R^{n \times m}$, for each $i\in [m]$, let $\OPT_i$ denote $\min_{x \in \R^{ d } } \| A x - B_i \|_g $. There is an algorithm that runs in $\mathcal{T}_{\reg,g,n,d,m}$ time and outputs a matrix $X'\in \R^{d \times m}$ such that 
$
 \| A X'_i - B_i \|_g \leq \reg_{g,d} \cdot \OPT_i, \forall i \in [m] 
$
and outputs a vector of estimated regression costs $v\in \R^d$ such that 
$
\OPT_i \leq v_i \leq \reg_{g,d} \cdot \OPT_i, \forall i \in [m].
$ 
The success probability is at least $1-1/\poly(nm)$.

Some functions for which regression itself is non-trivial are e.g.,
the $\ell_0$-loss function and Tukey function. 
The $\ell_0$-loss function corresponds to the nearest codeword
problem over the reals and has slightly better than an
$O(d)$-approximation
(\cite{bk02,apy09}, see also \cite{bkw17}).
For the Tukey function, 
\cite{cww19} shows that Tukey regression is NP-hard, and it also gives approximation algorithms. 
For discussion on regression solvers, we refer the reader to Appendix~\ref{sec:regression}.

\paragraph{Zero-one law (sufficient conditions):}
For any function, as long as the above general three properties hold, we can provide an efficient algorithm, as our following main theorem shows. 
\begin{theorem}
	\label{thm:intro_general_function_low_rank}
Given a matrix $A\in \R^{n\times n}$, let $k\geq 1,k'=2k+1$. Let $g : \R \rightarrow \R_{\geq 0}$ denote a function satisfying the  $\ati_{g,k'}$-approximate triangle inequality,  
the $\sym_g$-monotone property 
, and the $(\reg_{g,k'},\mathcal{T}_{\reg,g,n,k',n})$-regression property. 
Let $\OPT = \min_{\rank-k~A'}\| A' - A \|_g$. There is an algorithm that runs in $\wt{O}(n + \mathcal{T}_{\reg,g,n,k',n})$ time and outputs a set $S\subseteq [n]$ with $|S|= O(k \log n)$ such that
with probability at least $0.99$, 
\begin{align*}
\min_{X \in \R^{|S| \times n}} \| A_S X - A \|_g \leq \ati_{g,k'} \cdot \sym_{g} \cdot \reg_{g,k'} \cdot O(k\log k) \cdot \OPT.
\end{align*}
\end{theorem}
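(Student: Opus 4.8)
The plan is to combine a structural (existence) result with a search procedure driven by the regression oracle. The structural part asserts that \emph{some} subset $T\subseteq[n]$ of at most $k'=2k+1$ columns of $A$ already spans a rank-$k'$ matrix of cost $\ati_{g,k'}\cdot\sym_g\cdot\poly(k)\cdot\OPT$; the algorithmic part then shows that a subset $S$ of size $O(k\log n)$ with a comparable guarantee — losing only a further $\reg_{g,k'}$ factor from the use of an approximate regression solver — can be found in $\wt O(n)$ time plus $O(\log n)$ calls to that solver, i.e.\ $\wt O(n+\mathcal{T}_{\reg,g,n,k',n})$ total.

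For the structural lemma, write an optimal rank-$k$ solution as $A^*=UV$ with $U\in\R^{n\times k}$, $V\in\R^{k\times n}$, put $\Delta=A-A^*$ so that $\sum_i\|\Delta_i\|_g=\OPT$, and note $A_i=UV_i+\Delta_i$. If $T$ with $|T|\le k'$ is chosen so that $\{V_j:j\in T\}$ spans the column space of $V$, then for each $i$ one may pick coefficients $c_{ij}$, $j\in T$, with $\sum_{j\in T}c_{ij}A^*_j=A^*_i$, leaving residual $A_i-\sum_{j\in T}c_{ij}A_j=\Delta_i-\sum_{j\in T}c_{ij}\Delta_j$; by the $\ati_{g,k'}$-approximate triangle inequality its $g$-cost is at most $\ati_{g,k'}\big(\|\Delta_i\|_g+\sum_{j\in T}\|c_{ij}\Delta_j\|_g\big)$, and by the monotone property a coefficient of magnitude at most $1$ inflates $\|\Delta_j\|_g$ by only a $\sym_g$ factor. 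The delicate points — and this is exactly where the lack of scale-invariance of $g$ enters — are that the $c_{ij}$ may exceed $1$ in magnitude and that each $\Delta_j$, $j\in T$, is re-charged once per column $i$, so its total contribution is $\sum_i\|c_{ij}\Delta_j\|_g$. I would control these by choosing $T$ not arbitrarily but via an Auerbach/well-conditioned basis of the column space of $V$ (bounding $|c_{ij}|\le\poly(k)$) combined with a residual-aware selection — exploiting the slack between $k$ and $2k+1$ — that keeps the columns of $T$ from carrying more than a $\poly(k)/n$ fraction of $\OPT$, so that summing over $i$ yields the stated bound.

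To locate such columns efficiently, the plan is to run $O(\log n)$ rounds of adaptive residual-based sampling, maintaining a set $S$ (initially empty). In each round I would invoke the $(\reg_{g,k'},\mathcal{T}_{\reg,g,n,k',n})$-regression algorithm with design matrix $A_S$ to obtain, for every column $i$, both an approximate projection of $A_i$ onto the span of $A_S$ and an estimate $v_i$ with $\min_x\|A_Sx-A_i\|_g\le v_i\le\reg_{g,k'}\cdot\min_x\|A_Sx-A_i\|_g$, then sample $O(k)$ fresh columns with probability proportional to $v_i$ and add them to $S$, so $|S|=O(k\log n)$ at the end. Correctness is a potential argument: as long as $\cost(S):=\min_X\|A_SX-A\|_g$ exceeds $\ati_{g,k'}\cdot\sym_g\cdot\reg_{g,k'}\cdot O(k\log k)\cdot\OPT$, the structural lemma supplies an $O(k)$-size set whose addition to $S$ would reduce the cost by a constant factor, and the columns of that set must carry a constant fraction of the current residual mass (a column already well-approximated by $A_S$ cannot help), so up to the $\reg_{g,k'}$-distortion in the $v_i$ the sampling step captures enough of them for $\cost(S)$ to drop by a constant factor with constant probability. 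Passing between the residual before and after a round contributes the $\ati_{g,k'}$ factor, the monotone step contributes $\sym_g$, and the distorted sampling probabilities together with the approximate projection used to certify the final cost contribute $\reg_{g,k'}$. Using a potential that counts poorly-approximated columns rather than raw cost makes $O(\log n)$ rounds suffice regardless of $\|A\|_g/\OPT$, and a union bound over the rounds and the $n$ columns — the origin of the $\log n$ in $|S|$ — gives success probability $0.99$.

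The main obstacle will be the structural lemma in the non-scale-invariant regime: for $\ell_p$-error one rescales coefficients freely and invokes an $\ell_p$-Lewis-weight basis, but for a general $g$ — the Huber function being the motivating case — $g(cx)$ is neither $|c|^pg(x)$ nor bounded by $\sym_g\cdot g(x)$ when $|c|>1$, so the coefficient blow-up must be propagated through $g$ itself and $T$ must be chosen to bound coefficient magnitudes and residual mass simultaneously. A secondary difficulty is making the per-round constant-factor decrease of the potential rigorous — in particular arguing that sampling only $O(k)$ columns per round (rather than an entire $O(k\log k)$-size spanning set) still yields progress because the uncaptured residual mass is itself controlled — while threading the $\ati$, $\sym$, $\reg$, and $O(k\log k)$ factors through without incurring extra loss.
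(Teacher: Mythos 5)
Your structural lemma --- that some subset $T$ of at most $k'=2k+1$ columns of $A$ spans an approximation of cost $\ati_{g,k'}\cdot\sym_g\cdot\poly(k)\cdot\OPT$ --- is false for the class of functions covered by the theorem, and the paper itself proves this: the Huber-type function satisfies all three hypotheses with $\ati_{g,k'}=O(k)$ and $\sym_g=1$, yet the lower bound in Appendix~\ref{sec:hardinstance} exhibits a matrix for which, already for $k=1$, \emph{every} subset of $o(\sqrt{\log n})$ columns incurs approximation ratio $\omega(1)$; a $3$-column subset achieving an $O(1)$ factor would contradict it. Your proposed fix does not repair this: bounding the coefficients by $\poly(k)$ via an Auerbach/well-conditioned basis is not enough, because the $\sym_g$-monotone property only controls $g(cx)$ for $|c|\le 1$, and splitting a coefficient of magnitude $\poly(k)$ into unit pieces would need the approximate triangle inequality with $\poly(k)$ terms, whereas only $\ati_{g,2k+1}$ is assumed. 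Since your potential argument in the algorithmic part invokes exactly this structural lemma to certify a constant-factor decrease of the residual per round, that part inherits the gap; you also never derive the $O(\log k)$ (rather than $O(\log n)$) loss in the approximation factor.

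The paper's route never claims a small global spanning set. Lemma~\ref{lem:pro_random_S_i_two_parts} and Lemma~\ref{lem:pro_i_not_in_R_S_i} show that for a uniformly random $H$ of $2k$ surviving columns and a random further column $i$, with constant probability the maximum-determinant (Cramer's rule) sub-collection of $A^*_{H\cup\{i\}}$ excludes $i$, so $A^*_i$ is a combination of the columns of $A^*_H$ with \emph{all coefficients of magnitude at most $1$} --- this is precisely what makes $\sym_g$ together with $\ati_{g,2k+1}$ sufficient, with no rescaling and hence no dependence on scale invariance. Claims~\ref{cla:t_i_greater_than_19_over_20}, \ref{cla:t_i_less_than_m_over_5} and Lemma~\ref{lem:low_fitting_cost} then show a random $2k$-subset covers a constant fraction of the surviving columns at cost $O(k/m_i)$ times the residual optimum \emph{with the largest $m_i/(100k)$ noise columns removed}; the algorithm repeats the sample $\log n$ times per round and keeps the best by the $\reg_{g,2k}$ estimates, prunes the covered $1/20$ fraction, and recurses for $O(\log n)$ rounds (this recursion, not a one-shot spanning set, is where the necessary $O(k\log n)$ column count comes from), and the residual accounting --- each column's noise can be charged in only $O(\log k)$ rounds --- is what produces the $O(k\log k)$ factor that your sketch leaves unexplained.
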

Although the input matrix $A$ in the above statement is a square matrix, it is straightforward to extend the result to the rectangular case.

By the above theorem, we can obtain a good subset of columns. 
To further get a low rank matrix $B$ which is a good low rank approximation to $A$, it is sufficient to take an additional $\mathcal{T}_{\reg,g,n,|S|,n}$ time to solve the regression problem.

\paragraph{Zero-one law (necessary conditions):} In Appendix~\ref{sec:no_ati}, we show how to construct a monotone function without approximate triangle inequality such that it is not possible to obtain a good low rank approximation by selecting a small subset of columns.

In Appendix~\ref{sec:no_mon}, we discuss a function which has the approximate triangle inequality but is not monotone. We show that for some matrices, there is no small subset of columns which can give a good low rank approximation for such loss function.

\subsubsection{Lower Bound on the Number of Columns}
One may wonder if the $\log n$ blowup in rank is necessary in our theorem.
We show some dependence on $n$ is necessary by showing that for the important
Huber loss function,
at least $\sqrt{\log n}$ columns are required in order to obtain a constant factor approximation for $k = 1$:
\begin{theorem}
Let $H(x)$ denote the following function:
$
H (x) =
\begin{cases}
x^2 , & \text{~if~} |x| < 1;\\
|x|, & \text{~if~} |x| \geq 1.
\end{cases}
$

For any $n\geq 1,$ 
there is a matrix $A\in \R^{n \times n}$ such that, if we select 
$o(\sqrt{\log n})$ columns to fit the entire matrix, there is 
no $O(1)$-approximation, i.e., for any subset $S\subseteq[n]$ with $|S|=o(\sqrt{\log n}),$
\begin{align*}
\min_{X \in \R^{|S| \times n} } \| A_S X -A \|_H \geq 
\omega(1)\cdot\min_{\rank-1~A'} \| A' - A \|_H.
\end{align*}
\end{theorem}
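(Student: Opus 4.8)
The plan is to exhibit a single matrix $A$ that has a cheap rank-$1$ fit whose residual lies entirely in the \emph{quadratic} part of $H$, while any fit from $o(\sqrt{\log n})$ of its columns is forced, on a constant fraction of the mass, into the \emph{linear} part of $H$ — this asymmetry ($H$ shrinks cost superlinearly below the knee $|x|=1$ but only linearly above it) being exactly the feature absent from a scale-invariant $\ell_p$ loss. Concretely I would take $n$ of the form $2^L$ with $L=\Theta(\log n)$, partition the rows into a ``core'' block and $L$ ``scale'' blocks $R_1,\dots,R_L$, fix geometrically arranged magnitudes $\theta_1\ge\cdots\ge\theta_L$ tuned around the knee, and let column $j$ equal a common direction $w$ on the core block plus, on each $R_\ell$ with $\ell$ in a prescribed ``scale set'' $\mathcal L_j\subseteq[L]$, a feature of magnitude $\theta_\ell$ (plus a tiny generic perturbation to separate the columns sharing a scale). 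The sets $\mathcal L_j$ would be drawn from a combinatorial design in which each has size $\Theta(\sqrt{\log n})$ and the union of any $o(\sqrt{\log n})$ of them fails to cover $[L]$ (indeed misses a $1-o(1)$ fraction of the cost-weight of the scales); it is this ``each column touches $\sqrt{\log n}$ of the $\log n$ scales'' that turns ``$s$ columns cover at most $s\cdot\sqrt{\log n}$ scales'' into the threshold $s\ge\sqrt{\log n}$.

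\textbf{Upper bound on $\mathrm{OPT}$.} Use the explicit rank-$1$ competitor $A'=w\mathbf 1^\top$ (replicate the shared direction across all columns). Its residual is precisely the multi-scale perturbation, whose Huber cost is $\sum_{\ell}(\#\{(\text{col},\text{row})\text{ at scale }\ell\})\cdot H(\theta_\ell)+o(1)$; since each $\theta_\ell$ lies below the knee, $H(\theta_\ell)=\theta_\ell^2$, so this is a convergent geometric-type sum that a suitable calibration of the block sizes keeps at $O(1)$ (or even $o(1)$), whereas the ``local'' cost of a single missed scale inside a single column stays $\Omega(1)$.

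\textbf{Lower bound for every small $S$.} Fix any $S$ with $|S|=s=o(\sqrt{\log n})$ and any $X\in\R^{|S|\times n}$, and bound $\|A_SX-A\|_H$ column by column. If $j\notin S$ and $\ell\in\mathcal L_j$ is a scale \emph{not} covered by $\bigcup_{i\in S}\mathcal L_i$, then no column of $A_S$ has an appreciable entry on $R_\ell$, so the residual of column $j$ on $R_\ell$ still contains the full feature of magnitude $\theta_\ell$, costing $\Omega(|R_\ell|\,H(\theta_\ell))$; I would then close off the would-be escape of synthesizing that feature indirectly — via large coefficients acting through the core block — by noting that the core entries already sit at the knee, so any coefficient bounded away from its nominal value throws the core residual into the linear regime and costs $\Omega(1)$ on its own. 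Summing the uncovered-scale costs over $j\notin S$ and invoking the design property ($s=o(\sqrt{\log n})$ columns leave a $1-o(1)$ fraction of the weighted scales uncovered for a $1-o(1)$ fraction of columns) gives $\min_X\|A_SX-A\|_H\ge\omega(1)\cdot\mathrm{OPT}$, which is the theorem; note the argument is per-$S$, so no union bound over the $\binom ns$ choices is needed.

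\textbf{Main obstacle.} I expect the crux to be the joint calibration: the same parameters that make $A'$ cheap (geometric decay of the $\theta_\ell$, bounded total row count) must also make a missed scale expensive \emph{relative to} $\mathrm{OPT}$, and in particular must prevent a small $S$ from covering precisely the few ``expensive'' scales while leaving only ``cheap'' ones — forcing a careful choice of how many columns carry each scale and how wide the row blocks are, and it is exactly this balance that should pin the bound at $\sqrt{\log n}$ rather than $\log n$ or $\log\log n$. The second genuinely technical point is the case analysis in the lower bound that rules out indirect cancellation, across scales and through the core; this is where the non-scale-invariance of $H$ is used essentially, and where the analogous argument for an $\ell_p$ loss would break down.
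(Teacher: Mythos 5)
There is a genuine gap, and it is structural rather than a matter of calibration: the quantity you charge the column-subset fit for --- failure to \emph{cover} a scale block --- is exactly the quantity your own rank-$1$ benchmark is charged for. The competitor $A'=w\mathbf{1}^\top$ covers none of the scale blocks; its residual on $R_\ell$ in column $j$ is precisely the feature of magnitude $\theta_\ell$. A column-subset fit whose coefficients are pinned to their nominal values (as your core-block argument forces) reproduces $w$ on the core and leaves, on an uncovered $R_\ell$, a residual of the \emph{same} magnitude $\theta_\ell$, hence pays the same $H(\theta_\ell)$ per entry that the benchmark pays there. So your per-column charge $\Omega(|R_\ell| H(\theta_\ell))$ on uncovered scales sums to at most a constant times the cost of the explicit rank-$1$ competitor and can never certify an $\omega(1)$ ratio. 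The proposal is even internally inconsistent on this point: you want the total cost of $w\mathbf{1}^\top$ to be $O(1)$ while ``the local cost of a single missed scale inside a single column stays $\Omega(1)$'' --- but every one of those per-column features sits in the residual of $w\mathbf{1}^\top$, so both cannot hold unless there are only $O(1)$ features in the whole matrix, in which case $O(1)$ columns cover them. What is missing is an \emph{amplification} mechanism: the fit from few columns must be forced to pay $\omega(1)$ times what the rank-$1$ solution pays on the same entries, and with a common core direction and pinned coefficients nothing forces that.

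The paper gets the amplification by making the rank-one parts of different column groups have \emph{different magnitudes}: group $i$ (with $n^{1-2i\epsilon}$ columns, $\epsilon=\Theta(1/k)$, $k=\Theta(\sqrt{\log n})$) consists of $n^{1.5i\epsilon}\mathbf{1}$ plus random-sign noise with two tiers, ``small'' noise $n^{-0.2+i\epsilon}$ below the Huber knee on most coordinates and ``large'' noise $n^{0.5+2i\epsilon}$ above the knee on $n^{0.1}$ coordinates. Any $o(\sqrt{\log n})$ selected columns miss at least half the groups, and to fit a column of a missed group $i$ one must scale a selected column from some group $i'$ by roughly $n^{1.5(i-i')\epsilon}$ just to cancel the mean; if $i'<i$ the up-scaled small noise is amplified \emph{quadratically} by the knee, and if $i'>i$ the down-scaled large noise stays in the linear regime and remains too large --- in both cases the column costs $\omega(1)$ times its share of $\OPT$. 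Random signs plus a Chernoff bound guarantee a $2^{-\Theta(k)}$ fraction of coordinates cannot be cancelled, and because the construction is random the paper \emph{does} union bound over all columns and all $\binom{n}{k}$ subsets $S$ (your remark that no union bound is needed does not transfer to this construction). The $\sqrt{\log n}$ threshold then comes from balancing the per-group scale gap $n^{\Theta(1/k)}$ against the surviving sign-aligned fraction $2^{-\Theta(k)}$, i.e.\ one needs $\log n/k \gg k$ --- not from a covering design; forced rescaling through the knee, rather than scale coverage, is where the non-scale-invariance of the Huber loss actually enters.
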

Notice that the above function $H(x)$ is always a constant approximation to the Huber function (see Table~\ref{tab:M-estimators}) with $\tau=1$.
Thus, the hardness also holds for the Huber function.
For more discussion on our lower bound, we refer the reader to Appendix~\ref{sec:hardinstance}.

\subsection{Overview of our Approach and Related Work} 
\paragraph{Low Rank Approximation for General Functions.} 
A natural approach to low rank approximation is ``column subset selection'', which has been extensively studied in numerical linear algebra \cite{dmm06,dmm06b,dmm08,bmd09,bdm11,fegk13,bw14,ws15,swz17,swz19}. One can take the 
column subset selection algorithm
for $\ell_p$-low rank approximation
in \cite{cgklrw17} and try to adapt it to general loss functions. Namely,
their argument shows that for any matrix $A \in \mathbb{R}^{n \times n}$
there exists a subset $S$ of
$k$ columns of $A$, denoted by $A_S \in \mathbb{R}^{n \times k}$,
for which there exists a $k \times n$ matrix $V$ for which
$\|A_S V - A\|_p^p \leq (k+1)^p \min_{\textrm{rank-}k~B'}\|A-B'\|_p^p$; we refer
the reader to Theorem 3 of \cite{cgklrw17}. Given the existence of
such a subset $S$, a natural next idea is to then sample a set $T$
of $k$ columns of $A$ uniformly at random.
It is then likely the case that if we look at
a random column $A_i$,
(1) with probability $1/(k+1)$,
$i$ is not among the subset $S$ of $k$ columns out of the $k+1$
columns $T \cup \{i\}$ defining the optimal rank-$k$ approximation to
the submatrix $A_{T \cup \{i\}}$, and (2) with probability at least $1/2$,
the best rank-$k$ approximation
to $A_{T \cup \{i\}}$ has cost at most
\begin{eqnarray}\label{eqn:cost}
( 2(k+1) / n ) \cdot \min_{\textrm{rank-}k~B'}\|A-B'\|_p^p.
\end{eqnarray}
Indeed, (1)
follows from $T \cup \{i\}$ being a uniformly random subset of $k+1$
columns, while (2) follows from a Markov bound. The argument in Theorem 7
of \cite{cgklrw17} is then able to ``prune'' a $1/(k+1)$ fraction of columns
(this can be optimized to a constant fraction) in expectation, by ``covering''
them with the random set $T$. Recursing on the remaining columns, this procedure
stops after $k \log n$ iterations, giving a column subset of size $O(k^2 \log n)$
(which can be optimized to $O(k \log n)$)
and an $O(k)$-approximation.

The proof in \cite{cgklrw17}
of the existence of a subset $S$ of $k$ columns of $A$ spanning
a $(k+1)$-approximation above is quite general, and one might suspect it
generalizes to a large class of error functions. Suppose, for example,
that $k = 1$. The idea there is to write $A = A^* + \Delta$, where
$A^* = U \cdot V$ is the optimal rank-$1$ $\ell_p$-low rank approximation
to $A$. One then ``normalizes'' by the error, defining
$\tilde{A}^*_i = A_i^*/\|\Delta_i\|_p$ and letting $s$ be such that
$\|\tilde{A}^*_s\|_p$ is largest. The rank-$1$ subset $S$ is
then just $A_s$. Note that since $\tilde{A}^*$ has rank-$1$
and $\|\tilde{A}^*_s\|_p$ is largest, one can write $\tilde{A}^*_j$ for
every $j \neq s$ as $\alpha_j \cdot \tilde{A}^*_s$ for $|\alpha_j| \leq 1$.
The fact that $|\alpha_j| \leq 1$ is crucial; indeed, consider what
happens when we try to ``approximate'' $A_j$ by
$A_s \cdot \frac{\alpha_j \|\Delta_j\|_p}{\|\Delta_s\|_p}$. Then
%\begin{eqnarray*}
$\left\|A_j - A_s  \alpha_j \|\Delta_j\|_p / \|\Delta_s\|_p \right\|_p
\leq \|A_j - A_j^*\|_p + \left\|A_j^* - A_s \alpha_j \|\Delta_j\|_p / \|\Delta_s\|_p \right\|_p 
 = \|\Delta_j\|_p + \left\|A_j^*-(A_s^* + \Delta_s)  \alpha_j \|\Delta_j\|_p / \|\Delta_s\|_p \right\|_p 
=  \|\Delta_j\|_p + \left\| \Delta_s  \alpha_j \|\Delta_j\|_p / \|\Delta_s\|_p \right\|_p,$
%\end{eqnarray*}
and since the $p$-norm is monotonically increasing and $\alpha_j \leq 1$, the latter is at most
$\|\Delta_j\|_p + \|\Delta_s \frac{\|\Delta_j\|_p}{\|\Delta_s\|_p}\|_p$. So far, all we have used
about the $p$-norm is the monotone increasing property, so one could hope that the argument could be generalized to a much wider class of functions.

However, at this point the proof uses that the $p$-norm has {\it scale-invariance}, and so
%\begin{eqnarray*}
$\|\Delta_s \frac{\|\Delta_j\|_p}{\|\Delta_s\|_p} \|_p
= \|\Delta_j\|_p \cdot \|\frac{\Delta_s}{\|\Delta_s\|_p}\|_p = \|\Delta_j\|_p$,
%\end{eqnarray*}
and it follows that $\|A_j - A_s \frac{\alpha_j \|\Delta_j\|_p}{\|\Delta_s\|_p} \|_p \leq 2 \|\Delta_j\|_p$,
giving an overall $2$-approximation (recall $k = 1$). But what would happen for a general, not necessarily
scale-invariant function $g$? We need to bound
$\|\Delta_s \frac{\|\Delta_j\|_g}{\|\Delta_s\|_g}\|_g$. If we could bound this by $O(\|\Delta_j\|_g)$,
we would obtain the same conclusion as before, up to constant factors. Consider, though,
the ``reverse Huber function'': $g(x) = x^2$ if $x \geq 1$ and $g(x) = |x|$ for $x \leq 1$. Suppose that
$\Delta_s$ and $\Delta_j$ were just $1$-dimensional vectors, i.e., real numbers, so we need to bound
$g(\Delta_s g(\Delta_j) / g(\Delta_s))$ by $O(g(\Delta_j))$. Suppose $\Delta_s = 1$. Then
$g(\Delta_s) = 1$ and $g(\Delta_s g(\Delta_j)/g(\Delta_s)) = g(g(\Delta_j))$ and if $\Delta_j = n$,
then $g(g(\Delta_j)) = n^4 = g(\Delta_j)^2$, much larger than the $O(g(\Delta_j))$ we were aiming for.

Maybe the analysis can be slightly changed to correct for these normalization issues?  
This is not the case, as we show that unlike for $\ell_p$-low rank approximation, for the reverse Huber function
{\it there is no subset of $2$ columns of $A$ obtaining better than an $n^{1/4}$-approximation factor}. (See Section~\ref{sec:reverse_huber} for more details). Further, the lack of scale invariance not only breaks the argument in \cite{cgklrw17}, it shows that combinatorially
such functions $g$ behave very differently than $\ell_p$-norms. We show more generally there exist
functions, in particular the Huber function, for which one needs to choose $\Omega(\sqrt{\log n})$ columns to obtain
a constant factor approximation; we describe this more below.
Perhaps more surprisingly, we show a subset of $O(\log n)$ columns suffice to
obtain a constant factor approximation to the best rank-$1$ approximation for any function $g(x)$ which is
approximately monotone and has the approximate triangle inequality, the latter implying for any constant $C > 0$
and any $x \in \mathbb{R}_{\geq 0}$, $g(Cx) = O(g(x))$.
For $k > 1$, these conditions become: (1) $g(x)$ is monotone non-decreasing in $x$, (2) $g(x)$ is within a $\poly(k)$
factor of $g(-x)$, and (3) for any real number $x \in \mathbb{R}^{\geq 0}$, $g(O(kx)) \leq \poly(k) \cdot g(x)$. We show
it is possible to obtain an $O(k^2 \log k)$ approximation with $O(k \log n)$ columns. We give the intuition and main lemma statements for our result in Section \ref{sec:alg}, deferring proofs to the supplementary material.

Even for $\ell_p$-low rank approximation, our algorithms slightly improve and correct a minor error in
\cite{cgklrw17} which claims in Theorem 7 an $O(k)$-approximation with $O(k \log n)$ columns for $\ell_p$-low rank approximation. However,
their algorithm actually gives an $O(k \log n)$-approximation with $O(k \log n)$ columns. In \cite{cgklrw17} it was argued that one expects
to pay a cost of $O(k/n) \cdot \min_{\textrm{rank-}k~B'}\|A-B'\|_p^p$ per column as in (\ref{eqn:cost}), and since each column is only counted
in one iteration, summing over the columns gives $O(k) \cdot \min_{\textrm{rank-}k~B'}\|A-B'\|_p$ total cost. The issue is that
the value of $n$ is changing in each iteration, so if in the $i$-th iteration it is $n_i$, then we could pay
$n_i \cdot O(k/n_i) \cdot \min_{\textrm{rank-}k~B'}\|A-B'\|_p = O(k) \cdot  \min_{\textrm{rank-}k~B'}\|A-B'\|_p$ in each of $O(\log n)$ iterations,
giving $O(k \log n)$ approximation ratio. In contrast, our algorithm achieves an $O(k \log k)$ approximation ratio for
$\ell_p$-low rank approximation as a special case, which
gives the first $O(1)$ approximation in nearly linear time for any constant $k$ for $\ell_p$ norms. Our analysis is finer in that we show
not only do we expect to pay a cost of $O(k/n_i) \cdot \min_{\textrm{rank-}k~B'}\|A-B'\|_p^p$ per column in iteration $i$, we pay
$O(k/n_i)$ times the cost of the best rank-$k$ approximation to $A$ {\it after the most costly} $n/k$ columns have been removed; thus
we pay $O(k/n_i)$ times a {\it residual cost} with the top $n/k$ columns removed. This ultimately implies any column's cost can contribute
in at most $O(\log k)$ of $O(\log n)$ recursive calls, replacing an $O(\log n)$ factor with an $O(\log k)$ factor in the approximation ratio. This also gives the first $\poly(k)$-approximation for $\ell_0$-low rank approximation, studied
in \cite{bkw17}, improving the $O(k^2 \log (n/k))$-approximation there to $O(k^2 \log k)$ and giving the first constant
approximation for constant $k$.

\section{Algorithm for General Loss Low Rank Approximation}\label{sec:alg}

Our algorithm is presented in Algorithm~\ref{alg:general_function_low_rank}.
%Before we prove the correctness of the algorithm,
First, let us briefly analyze the running time.
Consider fixed $i\in[r],j\in[\log n]$.
Sampling $S_i^{(j)}$ takes $O(k)$ time.
Solving $\reg_{g,2k}$-approximate regression $\min_{x}\|A_{S_i^{(j)}}x-A_t\|_g$ for all $t\in T_{i-1}\setminus S_{i}^{(j)}$ takes $\mathcal{T}_{\reg,g,n,2k,|T_{i-1}\setminus S_i^{(j)}|}\leq \mathcal{T}_{\reg,g,n,2k+1,n}$ time.
Since finding $|T_{i-1}\setminus S_i^{(j)}|/20$ smallest element can be done in $O(n)$ time, $R_i^{(j)}$ can be computed in $O(n)$ time.
Thus the inner loop takes $O(n+\mathcal{T}_{\reg,g,n,2k+1,n})$ time.
Since $r=O(\log n)$, the total running time over all $i,j$ is $O((n+\mathcal{T}_{\reg,g,n,2k+1,n})\log^2 n)$. 
In the remainder of the section, we will sketch the proof of the correctness. 
For the missing proofs, we refer the reader to Appendix~\ref{sec:missing_proofs}.

\begin{algorithm}[t]
	\begin{algorithmic}[1]\caption{Low rank approximation algorithm for general functions}\label{alg:general_function_low_rank}
		\Procedure{\textsc{GeneralFunctionLowRankApprox}}{$A\in\mathbb{R}^{n\times n},k\in\mathbb{Z}_{\geq 1},g:\mathbb{R}\rightarrow \mathbb{R}_{\geq 0}$} %\Comment{Theorem~\ref{thm:general_function_low_rank}}
		\State Initialization: $T_0 \leftarrow [n],i\gets 1, r\gets 0$
		\For{$|T_{i-1}|\geq 1000 k$}
%		\State $m \leftarrow |T_{i-1}|$
		\For{$j = 1 \to \log n$}
		\State Sample $S^{(j)}_i$ from ${ T_{i-1} \choose 2k}$ uniformly at random \label{sta:sampling}
		%\For{$t \in T_{i-1} \backslash S^{(j)}$}
		%	\State $\text{cost}_t \leftarrow \textsc{RegressionSolver}(n, 2k, A_{S^{(j)}} , A_t)$ \Comment{$ A_{S^{(j)}} \in \R^{n \times 2k} , A_t \in \R^n $}
		%	\State \Comment{ $ \text{cost}_t \leq \reg_{g,2k} \cdot \min_{x \in \R^{2k} } \| A_{S^{(j)}} x - A_t \|_g$ }
		%\EndFor
		%\State $m \leftarrow |T_{i-1} \backslash S^{(j)}|, d \leftarrow 2k$
		\State Solve the $\reg_{g,2k}$-approximate regression  $\min_{x\in\mathbb{R}^{2k}}\|A_{S_i^{(j)}}x-A_t\|_g$ for each $t\in T_{i-1}\setminus S_i^{(j)}$, and let $v^{(j)}_{i,t}$ be the $\reg_{g,2k}$-estimated regression cost
		\Comment{See Section~\ref{sec:zeroonelaw} for regression property}
		%\State $\{ \cost_t \}_{t \in T_{i-1} \backslash S^{(j)}} \leftarrow \textsc{MultipleRegressionSolver}(g,n,d,m,A_{S^{(j)}}, A_{T_{i-1} \backslash S^{(j)} })$
		%\State \Comment{ $ \text{cost}_t \leq \reg_{g,2k} \cdot \min_{x \in \R^{2k} } \| A_{S^{(j)}} x - A_t \|_g$ }
		\State $R_i^{(j)} \leftarrow
		\{t\mid v_{i,t}^{(j)}\text{ is the bottom }|T_{i-1}\setminus S_i^{(j)}|/20\text{ largest value in }\{v_{i,t'}^{(j)}\mid t'\in T_{i-1}\setminus S_i^{(j)}\} \}$ %\textsc{BottomK}( \textsc{Sort}(\text{cost}), m/20)$
		\State $c_{i}^{(j)} \leftarrow \sum_{t \in R_i^{(j)}} v_{i,t}^{(j)} $
		\EndFor
		\State $j^* \leftarrow \arg\min_{j \in [\log n]} \left\{ c_i^{(j)} \right\}$
		\State $S_i \leftarrow S^{(j^*)}_i,R_i\leftarrow R^{(j^*)}_i,T_i \leftarrow T_{i-1} \backslash \left(S_i \cup R_i\right)$
		\State $r\gets i$
		\State  $i\gets i + 1$
		\EndFor
		%\State $S \leftarrow T_r\cup\bigcup_{i\in[r]} S_i$
		\State \Return $S=T_r\cup\bigcup_{i\in[r]} S_i$ \Comment{It is easy to see $r\leq O(\log n)$ from the above procedure}
		\EndProcedure
	\end{algorithmic}
\end{algorithm}

\subsection{Properties of Uniform Column Sampling}
Let us first introduce some useful notation.
Consider a rank-$k$ matrix $M^* \in \R^{n \times m}$. 
For a set $H \subseteq [m]$, let $R_{M^*}(H) \subseteq H$ be a set such that
\begin{align*}
R_{M^*}(H) = \arg\max_{P:P \subseteq H} \left\{ \left|\det\left( (M^*)_P^Q \right)\right| ~\bigg|~ |P| = |Q|= \mathrm{rank} (M^*_H), Q \subseteq [n]  \right\}.
\end{align*}
where $\det(C)$ denotes the determinant of a square matrix $C$.
Notice that in the above formula, the maximum is over all possible choices of $P$ and $Q$ while $R_{M^*}(H)$ only takes the value of the corresponding $P$.
By Cramer's rule, if we use a linear combination of the columns of $M^*_{R_{M^*}(H)}$ to express any column of $M^*_H$, the absolute value of every fitting coefficient will be at most $1$.
For example, consider a rank $k$ matrix $M^*\in\mathbb{R}^{n\times(k+1)}$ and $H=[k+1]$.
Let $P\subseteq [k+1],Q\subseteq [n],|P|=|Q|=k$ be such that $|\det({(M^*)}_P^Q)|$ is maximized.
Since $M^*$ has rank $k$, we know $\det({(M^*)}_P^Q)\not= 0$ and thus the columns of $M^*_P$ are independent.
Let $i\in [k+1]\setminus P$.
Then the linear equation $M^*_Px=M^*_i$ is feasible and
there is a unique solution $x$.
Furthermore, by Cramer's rule $x_j=\frac{\det({(M^*)}^Q_{[k+1]\setminus\{j\}})}{\det({(M^*)}_P^Q)}$. 
Since $|\det({(M^*)}_P^Q)|\geq |\det({(M^*)}^Q_{[k+1]\setminus\{j\}})|$, we have $\|x\|_{\infty}\leq 1$.

Consider an arbitrary matrix $M\in\mathbb{R}^{n\times m}$. 
We can write $M=M^*+N,$ where $M^*\in\mathbb{R}^{n\times m}$ is an arbitrary rank-$k$ matrix, and $N\in\mathbb{R}^{n\times m}$ is the residual matrix.
%In this section, we will show that if we randomly choose $2k$ columns, then with constant probability, these $2k$ columns can fit many other columns very well.
The following lemma shows that, if we randomly choose a subset $H\subseteq[m]$ of $2k$ columns, and we randomly look at another column $i,$ then with constant probability, the absolute values of all the coefficients of using a linear combination of the columns of $M^*_H$ to express $M^*_i$ are at most $1$, and furthermore, if we use the same coefficients to use columns of $M_H$ to fit $M_i$, then the fitting cost is proportional to $\|N_H\|_g+\|N_i\|_g.$
 \begin{lemma}\label{lem:pro_random_S_i_two_parts}
 Given a matrix $M \in \R^{n \times m}$ and a parameter $k\geq 1$, %let $A^*\in \R^{n\times m}$ denote
 let $M^*\in\mathbb{R}^{n\times m}$ be an arbitrary rank-$k$ matrix.
Let $N = M - M^*$. Let $H \subseteq [m]$ be a uniformly random subset of $[m]$, and let $i$ denote a uniformly random index sampled from $[m] \backslash H$. %Let $A^*_S \in \R^{n \times |S|}$ denote a matrix that selects a subset of columns of $A^*$. %Let $(A^*_S)_j$ denote the $j$-th column of $A^*_S$. 
Then 
%we have the following results:
$
\mathrm{(\RN{1})}  \Pr%_{S\sim { [n] \choose {2k}}, i \sim [n] \backslash S } 
\left[ i \notin R_{M^*}(H\cup \{i\}) \right] \geq 1/2;
$
$\mathrm{(\RN{2})}$ If $i\notin R_{M^*}(H \cup \{i\})$, then there exist $|H|$ coefficients $\alpha_1, \alpha_2, \cdots, \alpha_{|H|}$ for which
$
M_i^* = \sum_{j=1}^{|H| } \alpha_j (M_H^*)_j, \forall j\in [|H|], |\alpha_j|\leq 1,
$
and
$
\min_{x\in \R^{|H|}} \| M_H x - M_i\|_g \leq \ati_{g,|H|+1} \cdot \sym_g \cdot \left( \| N_i \|_g + \sum_{j=1}^{|H|} \| (N_H)_j \|_g \right).
$
\end{lemma}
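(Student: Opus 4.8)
The plan is to prove the two parts separately, with part (I) being a clean symmetry argument and part (II) being the substantive one, relying on the $\sym_g$-monotone property and the $\ati_{g,t}$-approximate triangle inequality together with the Cramer's rule fact established just before the lemma.

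For part (I), I would argue as follows. Condition on the set $H \cup \{i\}$, which is a uniformly random subset of $[m]$ of size $|H|+1$; call it $U$. Given $U$, the index $i$ is a uniformly random element of $U$ (since $H$ is uniform of size $|H|$ and $i$ is uniform from the complement, the pair $(H,i)$ is exchangeable, so conditioned on $U = H \cup \{i\}$ the ``singled-out'' index $i$ is uniform in $U$). Now $R_{M^*}(U) \subseteq U$ has size exactly $\rank(M^*_U) \leq k$. Since $|U| = |H|+1 = 2k+1 > 2k \geq 2\,\rank(M^*_U)$, we get $\Pr[i \in R_{M^*}(U) \mid U] = |R_{M^*}(U)|/|U| \leq k/(2k+1) < 1/2$, hence $\Pr[i \notin R_{M^*}(H \cup \{i\})] \geq 1/2$. (I should double-check the exact size hypothesis on $H$ in the lemma statement; the algorithm uses $|S_i^{(j)}| = 2k$, so $|H \cup\{i\}| = 2k+1$ and the bound $k/(2k+1) < 1/2$ is what is needed.)

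For part (II), suppose $i \notin R_{M^*}(H\cup\{i\})$. Set $P = R_{M^*}(H \cup \{i\}) \subseteq H$. By the Cramer's rule discussion preceding the lemma, there are coefficients with $\|\alpha\|_\infty \leq 1$ expressing $M_i^*$ as $\sum_j \alpha_j (M_P^*)_j$; extend $\alpha$ by zeros on $H \setminus P$ to get $|H|$ coefficients, still bounded by $1$, with $M_i^* = \sum_{j=1}^{|H|} \alpha_j (M_H^*)_j$. Now use this same $\alpha$ as a feasible solution to the regression $\min_x \|M_H x - M_i\|_g$. Write $M_H x = M_H^* \alpha + N_H \alpha$ and $M_i = M_i^* + N_i = M_H^*\alpha + N_i$, so $M_H \alpha - M_i = N_H \alpha - N_i$, an $n$-dimensional vector. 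Apply $\|\cdot\|_g = \sum_{\text{rows}} g(\cdot)$ entrywise: each entry of $N_H\alpha - N_i$ is a sum of $|H|+1$ reals, namely $\sum_j \alpha_j (N_H)_{j} - N_i$ coordinatewise, so the $\ati_{g,|H|+1}$-approximate triangle inequality gives, for each row $\ell$,
\[
g\!\left( (N_H\alpha - N_i)_\ell \right) \leq \ati_{g,|H|+1} \cdot \left( g\!\left( (N_i)_\ell \right) + \sum_{j=1}^{|H|} g\!\left( \alpha_j (N_H)_{j,\ell} \right) \right).
\]
Since $|\alpha_j| \leq 1$, i.e. $|\alpha_j (N_H)_{j,\ell}| \leq |(N_H)_{j,\ell}|$, the $\sym_g$-monotone property gives $g(\alpha_j (N_H)_{j,\ell}) \leq \sym_g \cdot g((N_H)_{j,\ell})$. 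Summing over all rows $\ell$ and folding the $\sym_g$ factor through (it only multiplies the $N_H$ terms, but we may freely apply it to all terms since $\sym_g \geq 1$) yields
\[
\|M_H \alpha - M_i\|_g \leq \ati_{g,|H|+1} \cdot \sym_g \cdot \left( \|N_i\|_g + \sum_{j=1}^{|H|} \|(N_H)_j\|_g \right),
\]
and since the true regression optimum is at most the cost of the feasible point $\alpha$, the claimed bound follows.

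The main obstacle I anticipate is purely bookkeeping rather than conceptual: making sure the approximate triangle inequality is invoked with the correct arity ($|H|+1$ summands, matching $k' = 2k+1$ in the main theorem) and that the monotonicity factor $\sym_g$ is applied per-entry before summing, so that the final constants come out exactly as stated. One subtlety worth care is that $R_{M^*}(H\cup\{i\})$ might have size strictly less than $k$ (if $M^*_{H\cup\{i\}}$ is rank-deficient), but this only helps: the Cramer's rule argument still produces $\|\alpha\|_\infty \leq 1$ for whatever the rank is, and padding with zeros is harmless. Everything else is a direct chaining of the two structural properties.
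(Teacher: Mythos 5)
Your proposal is correct and takes essentially the same route as the paper: part (I) by the symmetry argument with $|H|=2k$, so $|R_{M^*}(H\cup\{i\})|\leq k < (2k+1)/2$, and part (II) by plugging the Cramer's-rule coefficients (with $|\alpha_j|\leq 1$, padded by zeros) into the regression as a feasible solution and chaining the entrywise $\ati_{g,|H|+1}$-approximate triangle inequality with the $\sym_g$-monotone property. The only nit is that the triangle inequality produces $g\left(-(N_i)_\ell\right)$ rather than $g\left((N_i)_\ell\right)$ (and $g$ need not be symmetric, e.g.\ the quantile loss), so that term also requires the $\sym_g$-monotone property --- which your final step of folding $\sym_g$ through all terms supplies, exactly as the paper does.
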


Notice that part $\mathrm{(\RN{2})}$ of the above lemma does not depend on any randomness of $H$ or $i$. 
By applying part $\mathrm{(\RN{1})}$ of the above lemma, it is enough to prove that if we randomly choose a subset $H$ of $2k$ columns, there is a constant fraction of columns that each column $M_i^*$ can be expressed by a linear combination of columns in $M_H^*,$ and the absolute values of all the fitting coefficients are at most $1$. 
Because of Cramer's rule, it thus suffices to prove the following lemma.
\begin{lemma}\label{lem:pro_i_not_in_R_S_i}
\begin{align*}
\Pr_{H \sim {{[m]}\choose{2k}}} \biggl[ \biggl| \biggl\{ i ~\bigg|~ i\in [m]\setminus H,i\not\in R_{M^*}(H\cup \{i\}) \biggr\} \biggr|\geq (m-2k)/4\biggr] \geq 1/4.
\end{align*}
\end{lemma}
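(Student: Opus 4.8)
The plan is to reduce the statement to a reverse Markov inequality applied to the ``fraction of good indices''. For a fixed $H\in\binom{[m]}{2k}$, write
\[
f(H) \;=\; \bigl|\{\, i\in[m]\setminus H \;:\; i\notin R_{M^*}(H\cup\{i\}) \,\}\bigr|,
\]
so the target probability is $\Pr_H\bigl[f(H)\geq (m-2k)/4\bigr]$, and $f(H)/(m-2k)\in[0,1]$ always (the case $m\le 2k$ is vacuous and is anyway excluded by the algorithm's invariant $|T_{i-1}|\ge 1000k$). The key intermediate fact I would establish is the lower bound $\E_H\bigl[f(H)/(m-2k)\bigr]\geq 1/2$; everything else is then routine.

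To prove this expectation bound, first I would record the sampling equivalence used implicitly in Lemma~\ref{lem:pro_random_S_i_two_parts}: drawing $H\sim\binom{[m]}{2k}$ and then $i$ uniformly from $[m]\setminus H$ induces the same distribution on the pair $(G,i)$, where $G:=H\cup\{i\}$, as drawing $G\sim\binom{[m]}{2k+1}$ and then $i$ uniformly from $G$. This is a one-line counting check: each $(2k+1)$-set $G$ arises from exactly $2k+1$ pairs $(H,i)$, each of equal probability $1/\bigl(\binom{m}{2k}(m-2k)\bigr)$, and $(2k+1)/\bigl(\binom{m}{2k}(m-2k)\bigr)=1/\binom{m}{2k+1}$. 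Next, the deterministic core: for \emph{any} $G\in\binom{[m]}{2k+1}$, by definition $|R_{M^*}(G)|=\rank(M^*_G)\le \rank(M^*)\le k$, so at least $(2k+1)-k=k+1$ of the indices $i\in G$ satisfy $i\notin R_{M^*}(G)$; hence $\Pr_{i\sim G}[\,i\notin R_{M^*}(G)\,]\ge (k+1)/(2k+1)\ge 1/2$. Averaging over $G$ and translating back through the sampling equivalence gives $\Pr_{H,i}[\,i\notin R_{M^*}(H\cup\{i\})\,]\ge 1/2$, and since for fixed $H$ this conditional probability is exactly $f(H)/(m-2k)$, taking the outer expectation over $H$ yields $\E_H[f(H)/(m-2k)]\ge 1/2$.

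Finally I would apply the reverse Markov step to $X:=f(H)/(m-2k)\in[0,1]$ with $\E[X]\ge 1/2$: writing $p:=\Pr[X\ge 1/4]$ we have $1/2\le \E[X]\le \tfrac14(1-p)+1\cdot p=\tfrac14+\tfrac34 p$, so $p\ge 1/3\ge 1/4$, which is the claimed bound (in fact with slack). I do not anticipate a genuine obstacle here; the only places that need care are the bookkeeping in the sampling-equivalence identity and the observation that $|R_{M^*}(G)|\le k$ is precisely what forces a constant fraction of every $(2k+1)$-window to be ``good''.
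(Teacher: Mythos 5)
Your proof is correct and follows essentially the same route as the paper: the bound $\E_H[f(H)/(m-2k)]\ge 1/2$ is exactly part (\RN{1}) of Lemma~\ref{lem:pro_random_S_i_two_parts} (your sampling-equivalence plus $|R_{M^*}(G)|\le k$ argument is the same symmetry argument the paper uses there), and your reverse-Markov step is the paper's averaging over the quantities $P_H=\Pr_{i\sim[m]\setminus H}[i\notin R_{M^*}(H\cup\{i\})]$ in probabilistic language. The only difference is cosmetic: you re-derive part (\RN{1}) inline rather than citing it, and your bookkeeping gives the slightly better constant $1/3$ in place of $1/4$.
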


\subsection{Correctness of the Algorithm}

We write the input matrix $A$ as $A^*+\Delta$, where $A^*\in\mathbb{R}^{n\times n}$ is the best rank-$k$ approximation to $A$, and $\Delta\in\mathbb{R}^{n\times n}$ is the residual matrix with respect to $A^*$. 
Then $\|\Delta\|_g=\sum_{i=1}^n \|\Delta_i\|_g$ is the optimal cost.
As shown in Algorithm~\ref{alg:general_function_low_rank}, our approach iteratively eliminates all the columns. 
In each iteration, we sample a subset of columns, and use these columns to fit other columns.
We drop a constant fraction of columns which have a good fitting cost.
Suppose the indices of the columns surviving after the $i$-th outer iteration are $T_i=\{ t_{i,1}, t_{i,2}, \cdots, t_{i,m_i} \} \subseteq [n].$ 
Without loss of generality, we can assume $\|\Delta_{t_{i,1}}\|_g\geq \|\Delta_{t_{i,2}}\|_g\geq \cdots\geq \|\Delta_{t_{i,m_i}}\|_g$.
The following claim shows that if we randomly sample $2k$ column indices $H$ from $T_i,$ then the cost of $\Delta_{H}$ will not be large.

\begin{claim}\label{cla:t_i_greater_than_19_over_20}
%Let $T_i = \{ t_{i,1}, t_{i,2}, \cdots, t_{i,m_i} \}\subseteq [n]$ with 
If $|T_i|=m_i \geq 1000k$,
%and $t_1 < t_2 < \cdots < m$.
%\begin{align*}
$\Pr_{H\sim {T_i \choose 2k}} \left[ \sum_{j\in H} \| \Delta_j \|_g \leq 400 \frac{k}{m_i} \sum_{j= \frac{m_i}{100k} }^{m_i} \| \Delta_{t_{i,j} }\|_g \right] \geq \frac{19}{20}.$
%\end{align*}
\end{claim}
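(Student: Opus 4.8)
The plan is to split the random block $H$ according to how the sorted columns are ranked: a short ``head'' consisting of the columns with the largest residual cost, and the ``tail'' consisting of the rest. I would show that with high probability $H$ avoids the head \emph{entirely}, and then bound the tail contribution by a first‑moment (Markov) argument. Concretely, set $L := m_i/(100k)$ (rounding to an integer, which the constants will absorb) and call $t_{i,1},\dots,t_{i,L}$ the head and $t_{i,L+1},\dots,t_{i,m_i}$ the tail. Since $H$ is a uniformly random $2k$-subset of $T_i$, each fixed column of $T_i$ lies in $H$ with probability exactly $2k/m_i$.

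First I would handle the head. The expected number of head columns contained in $H$ is $2kL/m_i \le 1/50 + 2k/m_i \le 1/50 + 1/500$, where the last step uses the hypothesis $m_i \ge 1000k$. Applying Markov's inequality to this nonnegative integer count, the event $E_1 := \{\,H \cap \{t_{i,1},\dots,t_{i,L}\} = \emptyset\,\}$ has probability at least $1 - 11/500$. Next I would handle the tail: let $X := \sum_{j=L+1}^{m_i}\|\Delta_{t_{i,j}}\|_g\cdot \mathbf{1}[t_{i,j}\in H]$, an unconditioned random variable. By linearity of expectation and $\Pr[t_{i,j}\in H]=2k/m_i$, $\E[X] = (2k/m_i)\sum_{j=L+1}^{m_i}\|\Delta_{t_{i,j}}\|_g \le (2k/m_i)\sum_{j=L}^{m_i}\|\Delta_{t_{i,j}}\|_g$, so Markov gives $\Pr[\,X > 80(k/m_i)\sum_{j=L}^{m_i}\|\Delta_{t_{i,j}}\|_g\,] \le 1/40$.

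Finally I would union-bound: on $E_1$ together with the complement of the bad tail event — probability at least $1 - 11/500 - 1/40 > 19/20$ — every column of $H$ lies in the tail, so $\sum_{j\in H}\|\Delta_j\|_g = X \le 80(k/m_i)\sum_{j=L}^{m_i}\|\Delta_{t_{i,j}}\|_g \le 400\,\frac{k}{m_i}\sum_{j=m_i/(100k)}^{m_i}\|\Delta_{t_{i,j}}\|_g$, which is exactly the asserted inequality (with slack in the constant, leaving room for the rounding of $L$).

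The step I expect to be the crux is recognizing \emph{why} the head must be excluded deterministically rather than merely in expectation: a single head column can have residual cost arbitrarily larger than the entire tail sum $\sum_{j\ge L}\|\Delta_{t_{i,j}}\|_g$, so a naive first‑moment bound on $\sum_{j\in H}\|\Delta_j\|_g$ cannot work. The argument balances two opposing requirements on $L$ — it must be small enough that the expected head‑intersection $2kL/m_i$ is a tiny constant (so a random $2k$-subset misses the head with probability close to $1$), yet the truncation point $m_i/(100k)$ must be chosen so that what remains, $\sum_{j\ge m_i/(100k)}\|\Delta_{t_{i,j}}\|_g$, is exactly the residual quantity we are allowed to charge against. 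Everything else (the two Markov applications, the union bound, and absorbing $m_i\ge 1000k$ and the integrality of $L$) is routine.
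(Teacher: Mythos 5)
Your proposal is correct and takes essentially the same route as the paper: both split the sorted columns at index $m_i/(100k)$ into a ``head'' that $H$ must avoid entirely and a ``tail'' whose contribution is controlled by Markov's inequality, with the union of the two failure probabilities kept below $1/20$. The only differences are cosmetic — the paper bounds the head-hit probability by the exact without-replacement product ($\leq 1/25$) and applies Markov conditionally on missing the head ($\leq 1/100$), whereas you use a first-moment bound on the head-intersection count and an unconditional Markov bound on the tail-restricted sum before intersecting the two events; your constants also verify ($11/500+1/40<1/20$).
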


By an averaging argument, in the following claim, we can show that there is a constant fraction of columns in $T_i$ whose optimal cost is also small.

\begin{claim}\label{cla:t_i_less_than_m_over_5}
If $|T_i|=m_i \geq 1000k$,  
$\left| \left\{ t_{i,j} ~\bigg|~ t_{i,j} \in T_i, \| \Delta_{t_{i,j}} \|_g \geq \frac{20}{m_i} \sum_{j'=\frac{m_i}{100k} }^{m_i} \| \Delta_{t_{i,j'}} \|_g \right\} \right| \leq \frac{1}{5} m_i.$

\end{claim}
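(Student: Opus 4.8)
The plan is to prove this by a direct counting (Markov/averaging-style) argument that exploits the fact, recorded in the text, that the columns of $T_i$ are sorted in non-increasing order of $\|\Delta_{t_{i,j}}\|_g$. Consequently the set whose size we must bound is automatically a \emph{prefix} $\{t_{i,1},\dots,t_{i,b}\}$ of this ordering, so it suffices to upper bound $b$, and the whole statement reduces to a one-line estimate comparing the prefix cost to the ``residual tail'' cost appearing in the threshold.

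\textbf{Setup.} First I would fix notation: let $\ell = \lceil m_i/(100k)\rceil$, let $W = \sum_{j'=\ell}^{m_i}\|\Delta_{t_{i,j'}}\|_g$ (the residual cost after discarding the top $m_i/(100k)$ columns), and let $\theta = 20W/m_i$ be the threshold in the claim. We may assume $W>0$: otherwise $\|\Delta_{t_{i,j'}}\|_g=0$ for every $j'\ge\ell$ and the statement is degenerate (and not needed downstream, since then all surviving columns are fit with cost $0$). Because $\|\Delta_{t_{i,1}}\|_g\ge\cdots\ge\|\Delta_{t_{i,m_i}}\|_g$, the set $B=\{t_{i,j}\in T_i:\|\Delta_{t_{i,j}}\|_g\ge\theta\}$ equals $\{t_{i,1},\dots,t_{i,b}\}$ for $b=|B|$, so the claim is equivalent to $b\le m_i/5$.

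\textbf{Main step (contradiction).} Suppose $b>m_i/5$. Since $k\ge1$ and $m_i\ge1000k$, we have $\ell\le m_i/(100k)+1\le m_i/100+1<m_i/5<b$, so the indices $j$ with $\ell\le j\le b$ number $b-\ell+1> m_i/5-(m_i/100+1)+1=19m_i/100$. Each such column lies in $B$ (hence $\|\Delta_{t_{i,j}}\|_g\ge\theta$) and has index $\ge\ell$ (hence appears as a summand in $W$), so
\[
W \;=\; \sum_{j'=\ell}^{m_i}\|\Delta_{t_{i,j'}}\|_g \;\ge\; \sum_{j=\ell}^{b}\|\Delta_{t_{i,j}}\|_g \;\ge\; (b-\ell+1)\,\theta \;>\; \frac{19m_i}{100}\cdot\frac{20W}{m_i} \;=\; \frac{19}{5}\,W \;>\; W,
\]
a contradiction. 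Hence $b\le m_i/5$, which is exactly the claim. (Replacing this deterministic argument by one invoking Claim~\ref{cla:t_i_greater_than_19_over_20} does not help, since its bound $400\frac{k}{m_i}W$ is larger than $\theta$, so the two approaches are essentially unrelated; the deterministic route above is the clean one.)

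\textbf{Main obstacle.} There is no substantive difficulty here; the only care required is bookkeeping of constants so that the final factor ($19/5$) exceeds $1$ with room to spare, and in particular handling the shift in the summation range — the sum defining $W$ starts at index $\ell\approx m_i/(100k)$ rather than at $1$, which is precisely why the hypotheses $k\ge1$ and $m_i\ge1000k$ are needed (to guarantee $\ell$ is a small constant fraction of $m_i$, and $\ell<b$). Floors/ceilings and the $W=0$ corner case cost only $O(1)$ and are dispatched as above.
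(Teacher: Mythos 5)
Your proof is correct and takes essentially the same approach as the paper: both arguments split off the first $\sim m_i/(100k)$ indices and apply a Markov/averaging-style count to the tail sum $W$, yours merely packaged as a contradiction via the sorted-prefix structure rather than the paper's direct bound of $\frac{m_i}{100k} + \frac{1}{10}\left(m_i - \frac{m_i}{100k}\right) \leq \frac{1}{5}m_i$. Your explicit treatment of the degenerate $W=0$ case is a small point of extra care that the paper's proof glosses over.
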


By combining Lemma~\ref{lem:pro_i_not_in_R_S_i}, part (\RN{2}) of Lemma~\ref{lem:pro_random_S_i_two_parts} with the above two claims, it is sufficient to prove the following core lemma.
It says that if we randomly choose a subset of $2k$ columns from $T_i$, then we can fit a constant fraction of the columns from $T_i$ with a small cost.

\begin{lemma}\label{lem:low_fitting_cost} 
If $|T_i|=m_i \geq 1000 k$, 
\begin{align*}
\Pr_{H \sim {T_i \choose 2k}} \left[ \left| \left\{ j ~\bigg|~ j\in T_i, \min_{x\in \R^{|H|}}\| A_H x - A_j \|_g \leq C_1\cdot \frac{1}{m_i} \cdot \sum_{j'=\frac{m_i}{100k} }^{m_i} \| \Delta_{t_{i,j'}} \|_g \right\} \right|  \geq \frac{1}{20} m_i \right] \geq \frac{1}{5},
\end{align*}
where
$
C_1 = 500 \cdot k \cdot \ati_{g,|S|+1} \cdot \sym_g .
$
\end{lemma}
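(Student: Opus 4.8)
The plan is to combine the two claims and two lemmas of this subsection by a single union bound over the random choice of $H$. Write $A = A^* + \Delta$ as above, set $m = m_i$, order $T_i$ so that $\|\Delta_{t_{i,1}}\|_g \ge \cdots \ge \|\Delta_{t_{i,m_i}}\|_g$, and abbreviate $\Psi := \sum_{j' = m_i/(100k)}^{m_i} \|\Delta_{t_{i,j'}}\|_g$, so that the target fitting cost is $C_1\Psi/m_i$. First I would instantiate Lemma~\ref{lem:pro_i_not_in_R_S_i} with the matrix $M = A_{T_i}$, the rank-$\le k$ matrix $M^* = A^*_{T_i}$ (indeed $\rank(A^*_{T_i}) \le \rank(A^*) = k$), and $N = \Delta_{T_i}$; this gives an event $\mathcal{E}_2$, depending only on $H \sim \binom{T_i}{2k}$, on which $\bigl|\{\, j \in T_i \setminus H : j \notin R_{A^*}(H \cup \{j\}) \,\}\bigr| \ge (m_i - 2k)/4$, with $\Pr[\mathcal{E}_2] \ge 1/4$. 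Let $\mathcal{E}_1$ be the event of Claim~\ref{cla:t_i_greater_than_19_over_20}, namely $\sum_{t \in H} \|\Delta_t\|_g \le 400(k/m_i)\Psi$, with $\Pr[\mathcal{E}_1] \ge 19/20$. Both are functions of the same draw $H$, so the union bound yields $\Pr[\mathcal{E}_1 \cap \mathcal{E}_2] \ge 1 - \tfrac{1}{20} - \tfrac{3}{4} = \tfrac{1}{5}$, which is exactly the probability claimed in the lemma.

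Next, condition on $\mathcal{E}_1 \cap \mathcal{E}_2$. By Claim~\ref{cla:t_i_less_than_m_over_5}, the set $L := \{\, j \in T_i : \|\Delta_j\|_g < (20/m_i)\Psi \,\}$ of ``cheap'' columns satisfies $|L| \ge \tfrac{4}{5}m_i$. Define
\[
J \ :=\ L \ \cap\ \bigl\{\, j \in T_i \setminus H : j \notin R_{A^*}(H \cup \{j\}) \,\bigr\}.
\]
Inclusion--exclusion inside $T_i$, together with $m_i \ge 1000k$ (so $2k \le m_i/500$), gives $|J| \ge \tfrac{4}{5}m_i + \tfrac{m_i - 2k}{4} - m_i \ge \tfrac{1}{20}m_i$; this is the one place the hypothesis $m_i \ge 1000k$ is used. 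For every $j \in J$ we may apply part~(\RN{2}) of Lemma~\ref{lem:pro_random_S_i_two_parts} (with the same $M, M^*, N$ and the index $j$), since $j \notin R_{A^*}(H \cup \{j\})$, obtaining
\[
\min_{x \in \R^{|H|}} \|A_H x - A_j\|_g \ \le\ \ati_{g,2k+1}\cdot \sym_g \cdot \Bigl( \|\Delta_j\|_g + \sum_{t \in H}\|\Delta_t\|_g \Bigr).
\]
On $\mathcal{E}_1$ and using $j \in L$ (and $k \ge 1$ for the first term), the parenthesis is at most $(20/m_i)\Psi + 400(k/m_i)\Psi \le 420(k/m_i)\Psi \le 500(k/m_i)\Psi$, so $\min_x \|A_H x - A_j\|_g \le 500k\cdot \ati_{g,2k+1}\cdot \sym_g \cdot \Psi/m_i = C_1\Psi/m_i$. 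Hence $J$ is contained in the set whose cardinality the lemma lower-bounds, so that set has size at least $|J| \ge m_i/20$ whenever $\mathcal{E}_1 \cap \mathcal{E}_2$ occurs, i.e.\ with probability at least $1/5$, completing the proof.

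I expect the one delicate point to be the constant bookkeeping behind $|J| \ge m_i/20$: the slack from Claim~\ref{cla:t_i_less_than_m_over_5} (the $\tfrac{1}{5}m_i$), the slack from Lemma~\ref{lem:pro_i_not_in_R_S_i} (the $\tfrac14(m_i-2k)$), and the bound $2k \le m_i/500$ from $m_i \ge 1000k$ combine essentially tightly, so these must be tracked carefully (and, if the margin is too thin, absorbed by a marginally larger threshold than $1000k$ or by the exact constants in Claims~\ref{cla:t_i_greater_than_19_over_20}--\ref{cla:t_i_less_than_m_over_5}). Everything else is a routine substitution: verifying that restricting $A, A^*, \Delta$ to the columns indexed by $T_i$ preserves the hypotheses of Lemmas~\ref{lem:pro_random_S_i_two_parts} and~\ref{lem:pro_i_not_in_R_S_i}, and that $\mathcal{E}_1$ and $\mathcal{E}_2$ are both measurable with respect to the single draw $H \sim \binom{T_i}{2k}$, so the union bound is legitimate.
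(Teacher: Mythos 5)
Your proposal is correct and follows essentially the same route as the paper's own proof: it combines Lemma~\ref{lem:pro_i_not_in_R_S_i} and part~(\RN{2}) of Lemma~\ref{lem:pro_random_S_i_two_parts} with Claims~\ref{cla:t_i_greater_than_19_over_20} and~\ref{cla:t_i_less_than_m_over_5}, takes a union bound over the two probabilistic events ($1/4-1/20=1/5$), and uses the same counting and constant bookkeeping ($420k\le 500k$). The $k/2$ shortfall you flag in $|J|\ge \tfrac{m_i}{20}-\tfrac{k}{2}$ is silently present in the paper as well (it rounds $(m_i-2k)/4$ up to $m_i/4$), and is closed either by observing that the $2k$ columns of $H$ themselves have zero fitting cost (when $g(0)=0$, as for all functions considered), or by a negligible adjustment of the constants, so your treatment is if anything the more careful one.
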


Let us briefly explain why the above lemma is enough to prove the correctness of our algorithm.
For each column $j\in[m]$, either the column $j$ is in $T_r$ and is selected by the end of the algorithm, or $\exists i<r$ such that $j\in T_i\setminus T_{i+1}$.
If $j\in T_i\setminus T_{i+1}$, then by the above lemma, we can show that with high probability, $\min_x\|A_{S_{i+1}}x-A_j\|_g\leq O(C_1 \|\Delta\|_1/|T_i|)$. 
Thus, $\min_X\|A_{S_{i+1}}X-A_{T_i\setminus T_{i+1}}\|_g\leq O(C_1 \|\Delta\|_1)$.
It directly gives a $O(rC_1)=O(C_1\log n)$ approximation. 
For the detailed proof of Theorem~\ref{thm:intro_general_function_low_rank}, we refer the reader to Appendix~\ref{sec:missing_proofs}.

\section{Experiments}\label{sec:exp}

\begin{figure*}[!t]
	\centering
	\begin{tabular}{cccc}
		\includegraphics[width=0.23\textwidth]{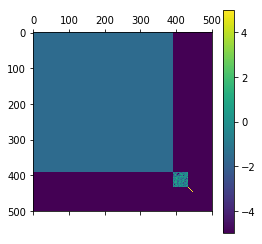}&
		\includegraphics[width=0.23\textwidth]{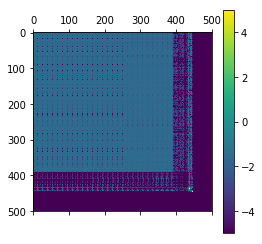} &
		 \includegraphics[width=0.23\textwidth]{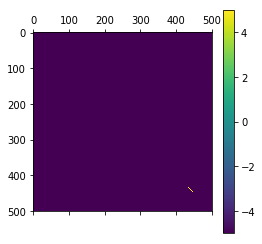}&
		\includegraphics[width=0.23\textwidth]{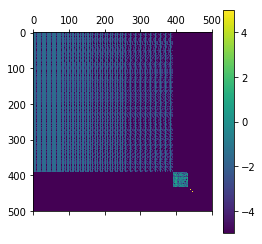}\\
		(a) & (b) & (c) & (d)\\
	\end{tabular}
	
	\caption{ The input data has size $500\times 500$. The color indicates the logarithmic magnitude of the absolute value of each entry. (a) is the input matrix. It contains $3$ blocks on its diagonal. The top-left one has uniformly small noise. The central one is the ground truth. The bottom-right one contains sparse outliers. Each block has rank $14$. So the rank of the input matrix is $3\times 14 = 42.$ (b) is the entry-wise $\ell_1$ loss rank-$14$ approximation given by~\cite{swz17}. As shown above, it mainly covers the small noise, but loses the information of the ground truth. (c) is the Frobenius norm rank-$14$ approximation given by the top $14$ singular vectors. As shown in the figure, it mainly covers the outliers. However, it loses the information of the ground truth. (d) is the rank-$1$ bi-criteria solution given by our algorithm. As we can see, it can cover the ground truth matrix quite well. }
	\label{fig:fig1} 
\end{figure*}

\begin{figure*}[!ht]
	\centering
	\begin{tabular}{cc}
		\includegraphics[width=0.45\textwidth]{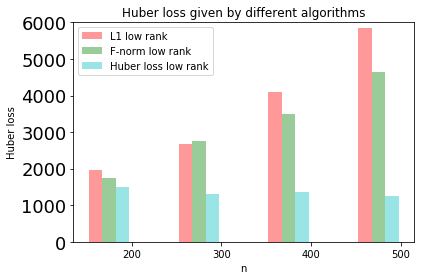}
	\end{tabular}
	
	\caption{ The Huber loss given by different algorithms. The red bar is for the entrywise $\ell_1$ low rank approximation algorithm~\cite{swz17}. The green bar is for traditional PCA. The blue bar is for our algorithm. For input size $n=200,300$, all the algorithms output rank-$12$ approximations. For input size $n= 400,500$, all the algorithms output rank-$14$ approximations. }
	\label{fig:fig2} 
\end{figure*}

We show that with the Huber loss low rank approximation, it is possible to outperform the SVD and entrywise $\ell_1$-low rank approximation on certain noise distributions. Even bi-criteria solutions can work very well. This motivates our study of general entry-wise loss functions. 

Suppose the noise of the input matrix is a mixture of small Gaussian noise and sparse outliers. Consider an extreme case: the data matrix $A\in\mathbb{R}^{n\times n}$ is a block diagonal matrix which contains three blocks: one block has size $n_1\times n_1$ $(n_1=\Theta(n))$ which has uniformly small noise (every entry is $\Theta(1/\sqrt{n})$), another block has only one entry which is a large outlier (with value $\Theta(n^{0.8})$), and the third matrix is the ground truth matrix with size $n_3\times n_3$ $(n_3 = \Theta(n^{0.6}))$ where the absolute value of each entry is at least $1/n^{o(1)}$ and at most $n^{o(1)}.$ If we apply Frobenius norm rank-$1$ approximation, then since $(n^{0.8})^2>(n^{0.6})^2\cdot n^{o(1)}$ and $(n^{0.8})^2> n^2\cdot (1/\sqrt{n})^2$, we can only learn the large outlier. If we apply entry-wise $\ell_1$ norm rank-$1$ approximation, then since $n^2\cdot 1/\sqrt{n} > (n^{0.6})^2\cdot n^{o(1)}$ and $n^2\cdot 1/\sqrt{n} > n^{0.8}$, we can only learn the uniformly small noise. But if we apply Huber loss rank-$1$ approximation, then we can learn the ground truth matrix.

A natural question is: can bi-criteria Huber loss low rank approximation also learn the ground truth matrix under certain noise distributions? We did experiments to answer this question.

{\bf Parameters.} In each iteration, we choose $2k$ columns to fit the remaining columns, and we drop half of the columns with smallest regression cost. In each iteration, we repeat $20$ times to find the best $2k$ columns. At the end, if there are at most $4k$ columns remaining, we finish our algorithm. We choose to optimize the Huber loss function, i.e., $f(x) = \frac{1}{2} x^2$ for $x\leq 1,$ and $f(x) = |x| - \frac{1}{2}$ for $x>1.$

{\bf Data.} We evaluate our algorithms on several input data matrix $A\in\mathbb{R}^{n\times n}$ sizes, for $n\in\{200,300,400,500\}.$ For rank-$1$ bi-criteria solutions, the output rank is given in Table~\ref{tab:output_rank}.

\begin{table}[h!]
	\center
	\caption{The output rank of our algorithm for different input sizes and for $k=1$. }\label{tab:output_rank}
	\begin{tabular}{ c | c | c | c | c }
		\hline
	$n$	& 200 & 300 & 400 & 500 \\
		\hline
	Output $\rank$  & 12  & 12  & 14  & 14 \\
		\hline
	\end{tabular}
\end{table}
$A$ is constructed as a block diagonal matrix with three blocks. The first block has size $\frac{4}{5}n\times \frac{4}{5}n$. It contains many copies of $k'$ different columns where $k'$ is equal to the output rank corresponding to $n$ (see Table~\ref{tab:output_rank}). The entry of a column is uniformly drawn from $\{-5/\sqrt{n},5/\sqrt{n}\}$. The second block is the ground truth matrix. It is generated by $1/\sqrt{k'}\cdot U\cdot V^{\top}$ where $U,V\in\mathbb{R}^{n\times k'}$ are two i.i.d. random Gaussian matrices. The last block is a size $k'\times k'$ diagonal matrix where each diagonal entry is a sparse outlier with magnitude of absolute value $5\cdot n^{0.8}$.

{\bf Experimental Results.} We compare our algorithm with Frobenius norm low rank approximation and entry-wise $\ell_1$ loss low rank approximation algorithms~\cite{swz17}. To make it comparable, we set the target rank of previous algorithms to be the output rank of our algorithm. In Figure~\ref{fig:fig1}, we can see that the ground truth matrix is well covered by our Huber loss low rank approximation. In Figure~\ref{fig:fig2}, we show that our algorithm indeed gives a good solution with respect to the Huber loss.

\paragraph{Acknowledgments.}
David P. Woodruff was supported in part by Office of Naval Research (ONR) grant N00014- 18-1-2562. Part of this work was done while he was visiting the Simons Institute for the Theory of Computing.
Peilin Zhong was supported in part by NSF grants (CCF-1703925, CCF-1421161, CCF-1714818, CCF-1617955 and CCF-1740833), Simons Foundation (\#491119 to Alexandr Andoni), Google Research Award and a Google Ph.D. fellowship.
Part of this work was done 
while Zhao Song and Peilin Zhong were interns at IBM Research - Almaden and 
while Zhao Song was visiting the Simons Institute for the Theory of Computing.

\newpage

\ifdefined\isarxivversion
\bibliographystyle{alpha}
\bibliography{ref}
\else
\bibliographystyle{unsrt}
\bibliography{ref}
\fi

\newpage
\appendix
\section{Missing Proofs in Section~\ref{sec:alg}}\label{sec:missing_proofs}

\subsection{Proof of Lemma~\ref{lem:pro_random_S_i_two_parts}}

\begin{proof}
	(\RN{1}) Since $\rank(M^*_{H \cup \{i\}}) \leq \rank(M^*) = k$,  $|R_{M^*}(H \cup \{i\})|\leq k$. Note that $H$ is sampled from ${[m]\choose {2k}}$ uniformly at random, $i$ is sampled from $[m] \backslash H$ uniformly at random, and $|H\cup \{i\}| = 2k+1$. 
	By symmetry we have
	\begin{align*}
	\Pr_{H\sim{[m]\choose {2k}}, i \sim [m] \backslash H } \left[ i \notin R_{M^*}(H \cup \{ i \}) \right] \geq 1- \frac{k}{2k+1} \geq 1/2.
	\end{align*}
	
	(\RN{2})
	Since $i\not\in R_{M^*}(H\cup \{i\})$, by Cramer's rule, there exist $\alpha_1,\alpha_2,\cdots,\alpha_{|H|}$ such that $M^*_i=\sum_{j=1}^{|H|} \alpha_j\cdot (M^*_H)_j$ and $\forall j\in[|H|],|\alpha_j|\leq 1$.
	Then we have
	\begin{align*}
	\min_{x\in \R^{|H|}} \| M_H x - M_i \|_g \leq & ~ \left\| \sum_{j=1}^{|H|} (M_H)_j \alpha_j - M_i \right\|_g \\
	= & ~ \left\|  \sum_{j=1}^{|H|} (M_H^*)_j \alpha_j - M_i^* + \sum_{j=1}^{|H|} (N_H)_j \alpha_j - N_i \right\|_g \\
	= & ~  \left\| \sum_{j=1}^{|H|} (N_H)_j \alpha_j - N_i \right\|_g \\
	\leq & ~ \ati_{g,|H| + 1} \cdot \left( \| - N_i \|_g  + \sum_{j=1}^{|H|} \|  (N_H)_j \alpha_j \|_g \right) \\
	\leq & ~ \ati_{g,|H| + 1} \cdot \left( \| - N_i \|_g  + \mon_g\cdot\sum_{j=1}^{|H|} \| (N_H)_j \|_g \right) \\
	\leq & ~ \ati_{g,|H| + 1} \cdot \left( \sym_g \cdot \| N_i \|_g  + \mon_g\cdot\sum_{j=1}^{|H|} \| (N_H)_j \|_g \right) \\
	\leq & ~ \ati_{g,|H| + 1} \cdot \sym_g \cdot \left( \| N_i \|_g  + \sum_{j=1}^{|H|} \| (N_H)_j \|_g \right),
	\end{align*}
	where the second step follows from $M = M^* + N$, the third step follows from $ \sum_{j=1}^{|H|} (M_H^*)_j \alpha_j - M_i^* = 0$, the fourth step follows from the approximate triangle inequality, the fifth step follows from the fact that $|\alpha_j| \leq 1$ and $g$ is $\mon_g$-monotone, and the sixth step follows from that $g$ is $\mon_g$-monotone. %Definition~\ref{def:monotone_property}, 
	%and the last step follows from $\sym_g \geq 1$.
	
\end{proof}

\subsection{Proof of Lemma~\ref{lem:pro_i_not_in_R_S_i}}

\begin{proof}
	Using Part (\RN{1}) of Lemma~\ref{lem:pro_random_S_i_two_parts}, we have
	\begin{align*}
	\Pr_{H\sim {{[m]}\choose{2k}},i\sim [m]\setminus H} [ i\not\in R_{M^*}(H\cup \{i\}) ] \geq 1/2. %  1- \frac{k}{2k+1} \geq 1/2.
	\end{align*}
	For each set $H$, we define $P_H=\Pr_{i\sim [m]\setminus H}[ i\not\in R_{M^*}(H\cup \{i\}) ].$
	We have
	\begin{align}\label{eq:average_sum_n_choose_2k_is_at_least_1_over_2}
	1 \geq \frac{1}{{{m}\choose{2k}}}\sum_{H\in{{[m]}\choose{2k}}}P_H\geq 1/2.
	\end{align}
	We can show
	\begin{align*}
	& ~  \frac{1}{ {m \choose {2k}} } \left| \left\{ H ~\bigg|~ H\in{{[m]} \choose{2k}},P_H\geq 1/4 \right \} \right| \\
	= & ~ \frac{1}{ {m \choose {2k}} } \sum_{ H\in{{[m]}\choose{2k}},P_H\geq 1/4 } 1 \\
	\geq & ~  \frac{1}{ {m \choose {2k}} } \sum_{ H\in{{[m]}\choose{2k}},P_H\geq 1/4 } P_H \\
	\geq & ~ \frac{1}{2} - \frac{1}{ {m \choose {2k}} } \sum_{ H\in{{[m]}\choose{2k}},P_H < 1/4 } P_H \\
	\geq & ~ \frac{1}{2} - \frac{1}{ {m \choose {2k}} } \sum_{ H\in{{[m]}\choose{2k}},P_H < 1/4 } \frac{1}{4} \\
	\geq & ~ \frac{1}{2} - \frac{1}{ {m \choose {2k}} }  {m \choose {2k}} \frac{1}{4} \\
	= & ~ \frac{1}{4},
	\end{align*}
	where the second step follows since $1 \geq P_H$, the third step follows since Eq.~\eqref{eq:average_sum_n_choose_2k_is_at_least_1_over_2}, the fourth step follows since $P_H < 1/4$.
	
	Thus, we have
	\begin{align*}
	\left| \left \{H ~ \bigg| ~ H\in{{[m]}\choose{2k}},P_H\geq 1/4 \right\} \right|\geq {{m}\choose{2k}}/ 4 .
	\end{align*}
	Recall the definition of $P_H$, we have
	\begin{align*}
	\left| \left\{ H ~\bigg|~ H \in {[m] \choose 2k} , \Pr_{i\sim [m] \backslash H} [i\notin R_{M^*}(H \cup \{i\}) ] \geq 1/4 \right\} \right| \geq {m \choose 2k}/4,
	\end{align*}
	which implies
	\begin{align*}
	\Pr_{H \sim {{[m]}\choose{2k}}} \biggl[ \biggl| \biggl\{ i ~\bigg|~ i\in [m]\setminus H,i\not\in R_{M^*}(H\cup \{i\}) \biggr\} \biggr|\geq (m-2k)/4\biggr] \geq 1/4.
	\end{align*}
	
\end{proof}

\subsection{Proof of Claim~\ref{cla:t_i_greater_than_19_over_20}}

\begin{proof}
	For simplicity, we omit $i$ in all the subscripts in this proof.
	
	\begin{align*}
	& ~\Pr_{H\sim {T \choose 2k}} \left[ \sum_{j\in H} \| \Delta_j \|_g \leq 400 \frac{k}{m} \sum_{j= \frac{m}{100k} }^m \| \Delta_{t_j }\|_g \right] \\
	= & ~ \Pr_{H\sim {T \choose 2k}} \left[ \sum_{j\in H} \| \Delta_j \|_g \leq 400 \frac{k}{m} \sum_{j= \frac{m}{100k} }^m \| \Delta_{t_j }\|_g ~\bigg|~ \exists j \leq \frac{m}{100k}, t_j \in H \right] \cdot \Pr_{H\sim {T \choose 2k}} \left[ \exists j \leq \frac{m}{100k}, t_j \in H \right] \\
	+ & ~ \Pr_{H\sim {T \choose 2k}} \left[ \sum_{j\in H} \| \Delta_j \|_g \leq 400 \frac{k}{m} \sum_{j= \frac{m}{100k} }^m \| \Delta_{t_j }\|_g ~\bigg|~ \forall j \leq \frac{m}{100k}, t_j \notin H \right] \cdot \Pr_{H\sim {T \choose 2k}} \left[ \forall j \leq \frac{m}{100k}, t_j \notin H \right] \\
	\leq & ~ \underbrace{ \Pr_{H\sim {T \choose 2k}} \left[ \exists j \leq \frac{m}{100k}, t_j \in H \right] }_{C_1} + \underbrace{ \Pr_{H\sim {T \choose 2k}} \left[ \sum_{j\in H} \| \Delta_j \|_g \leq 400 \frac{k}{m} \sum_{j= \frac{m}{100k} }^m \| \Delta_{t_j }\|_g ~\bigg|~ \forall j \leq \frac{m}{100k}, t_j \notin H \right] }_{C_2}
	\end{align*}
	It remains to upper bound the terms $C_1$ and $C_2$. We can upper bound $C_1$:
	\begin{align*}
	C_1 = & ~ 1 - (1 - \frac{m/100k}{m} ) \cdot (1- \frac{m/100k}{m-1}) \cdot \cdots \cdot (1- \frac{m/100k}{m - 2k + 1}) \\
	\leq & ~ 1 - (1 - \frac{m/100k}{m/2} )^{2k} \\
	\leq & ~ 1 - (1 - \frac{1}{25} )\\
	= & \frac{1}{25},
	\end{align*}
	where the second step follows since $m \geq 1000 k$.
	
	Using Markov's inequality,
	\begin{align*}
	C_2 \leq \frac{ \underset{H\sim {T \choose 2k}}{\E} \left[ \sum_{j\in H} \| \Delta_j \|_g \leq 400 \frac{k}{m} \overset{m}{\underset{j= \frac{m}{100k} }{\sum}} \| \Delta_{t_j }\|_g ~\bigg|~ \forall j \leq \frac{m}{100k}, t_j \notin H \right] }{400\frac{k}{m} \overset{m}{\underset{j= \frac{m}{100k} }{\sum}} \| \Delta_{t_j} \|_g } \leq  1/100,
	\end{align*}
	where the second step follows since
	\begin{align*}
	& ~\E_{H\sim {T \choose 2k}} \left[ \sum_{j\in H} \| \Delta_j \|_g \leq 400 \frac{k}{m} \sum_{j= \frac{m}{100k} }^m \| \Delta_{t_j }\|_g ~\bigg|~ \forall j \leq \frac{m}{100k}, t_j \notin H \right] \\
	\leq & ~ \frac{2k}{m - m/100k} \sum_{j=\frac{m}{100k}}^m \| \Delta_{t_j} \|_g \\
	\leq & ~ 4 \frac{k}{m} \sum_{j=\frac{m}{100k} }^m \| \Delta_{t_j} \|_g
	\end{align*}
\end{proof}

\subsection{Proof of Claim~\ref{cla:t_i_less_than_m_over_5}}
\begin{proof} 
	For simplicity, we omit $i$ in all the subscripts in this proof.
	\begin{align*}
	& ~\left| \left\{ t_{j} ~\bigg|~ t_{j} \in T, \| \Delta_{t_{j}} \|_g \geq \frac{20}{m} \sum_{j'=\frac{m}{100k} }^{m} \| \Delta_{t_{j'}} \|_g \right\} \right| \\
	\leq & ~ \left| \left\{ t_{j} ~\bigg|~ t_{j} \in T, j \leq \frac{m}{100k} \right\} \right| + \left| \left\{ t_{j} ~\bigg|~ t_{j} \in T, j > \frac{m}{100k}, \| \Delta_{t_{j}} \|_g \geq \frac{20}{m} \sum_{j'=\frac{m}{100k}}^{m} \| \Delta_{t_{j'}} \|_g \right\} \right| \\
	\leq & ~ \left| \left\{ t_{j} ~\bigg|~ t_{j} \in T, j \leq \frac{m}{100k} \right\} \right| + \left| \left\{ t_{j} ~\bigg|~ t_{j} \in T, j > \frac{m}{100k}, \| \Delta_{t_{j}} \|_g \geq \frac{10}{m-\frac{m}{100k}} \sum_{j'=\frac{m}{100k}}^{m} \| \Delta_{t_{j'}} \|_g \right\} \right| \\
	\leq & ~ \frac{m}{100k} + \frac{1}{10} (m - \frac{m}{100k}) \\
	\leq & ~ \frac{1}{5} m,
	\end{align*}
	where the second step follows since $\frac{20}{m} \geq \frac{10}{m-m/100k}$
\end{proof}

\subsection{Proof of Lemma~\ref{lem:low_fitting_cost}}
\begin{proof}
	For simplicity, we omit $i$ in all the subscirptis in this proof.
	
	Let $M=A_{T}$, $M^*=A^*_{T}$ and $N=\Delta_T$.
	Then we can apply Lemma~\ref{lem:pro_i_not_in_R_S_i} and part (\RN{2}) of Lemma~\ref{lem:pro_random_S_i_two_parts}:
	\begin{align}\label{eq:random_S_i_cost_ati_sym_i_S_j}
	\Pr_{H \sim  {T \choose 2k} } \left[  \left| \left\{ j \in T ~ \bigg| ~ \min_{x\in \R^{|H|}} \| A_H x - A_j \|_g \leq \ati_{g,|H|+1} \cdot \sym_g \cdot \left( \| \Delta_j \|_g + \sum_{j'=1}^{|H|} \| (\Delta_H)_{j'} \|_g \right) \right\}  \right| \geq \frac{m}{4}  \right] \geq \frac{1}{4}
	\end{align}
	By Claim~\ref{cla:t_i_greater_than_19_over_20}, we have
	\begin{align}\label{eq:bound_sum_Delta_S_j}
	\Pr_{H \sim  {T \choose 2k} } \left[ \sum_{j=1}^{|H|} \| (\Delta_H)_j \|_g \leq 400 \frac{k}{m} \sum_{j=\frac{m}{100k}}^m \| \Delta_{t_j} \|_g \right] \geq \frac{19}{20}
	\end{align}
	Due to Claim~\ref{cla:t_i_less_than_m_over_5},
	\begin{align*}
	\left| \left\{ t_j ~\bigg|~ t_j \in T, \| \Delta_{t_j} \|_g \geq \frac{20}{m} \sum_{j'=\frac{m}{100k}}^m \| \Delta_{t_{j'}} \|_g \right\} \right| \leq \frac{1}{5} m
	\end{align*}
	Combining the above equation with the pigeonhole principle, for any $I \subseteq T$ with $|I| \geq m/4$, we have
	\begin{align}\label{eq:bound_Delta_i}
	\left| \left\{ t_j ~\bigg|~ t_j \in I, \| \Delta_{t_j} \|_g < \frac{20}{m} \sum_{j'=\frac{m}{100k}}^m \| \Delta_{t_{j'}} \|_g \right\} \right| \geq \frac{1}{4} m - \frac{1}{5} m = \frac{1}{20} m
	\end{align}
	Consider the quantity $\| \Delta_j \|_g + \sum_{j=1}^{|H|} \| (\Delta_H)_{j'} \|_g $ in Eq.~\eqref{eq:random_S_i_cost_ati_sym_i_S_j}. We use Eq.~\eqref{eq:bound_sum_Delta_S_j} and Eq.~\eqref{eq:bound_Delta_i} to provide an upper bound,
	\begin{align*}
	\| \Delta_j \|_g + \sum_{j'=1}^{|H|} \| (\Delta_H)_{j'} \|_g \leq \left( \frac{20}{m} + \frac{400k}{m} \right) \sum_{j' = \frac{m}{100k}}^m \|\Delta_{t_{j'}} \|_g.
	\end{align*}
	Eq.~\eqref{eq:bound_sum_Delta_S_j} will decrease the final probability by $(1-19/20)$ (from $1/4$ to $1/4 - 1/20$). Eq.~\eqref{eq:bound_Delta_i} will decrease the size of this set of $j$ by $\frac{1}{5} m$ (from $\frac{1}{4} m$ to $\frac{1}{4} m - \frac{1}{5} m$).
	
	Putting it all together, we can update Eq.~\eqref{eq:random_S_i_cost_ati_sym_i_S_j} in the following sense,
	\begin{align*}
	\Pr_{H \sim {T \choose 2k}} \left[ \left| \left\{ j ~\bigg|~ j\in T, \min_{x\in \R^{|H|}}\| A_H x - A_j \|_g \leq C\cdot\frac{1}{m} \cdot \sum_{j'=\frac{m}{100k} }^m \| \Delta_{t_{j'}} \|_g \right\} \right|  \geq (\frac{1}{4} -\frac{1}{5}) m \right]
	\geq \frac{1}{4}-\frac{1}{20}.
	\end{align*}
	where
	\begin{align*}
	C = (400+20) \cdot k \cdot \ati_{g,|H|+1} \cdot \sym_g.
	\end{align*}
\end{proof}

\subsection{Proof of Theorem~\ref{thm:intro_general_function_low_rank}}

\begin{proof}
	The running time is discussed at the beginning of Section~\ref{sec:alg}.
	In the remaining of the proof, we will focus on the correctness of Algorithm~\ref{alg:general_function_low_rank}.
	
	Firstly, let us consider the size of the output $S$. 
	For $i\in\{0\}\cup[r]$, let $m_i=|T_i|$.
	We set number of rounds $r$ to be the smallest value such that $m_r<1000 k$. 
	By the algorithm, we have $m_i=m_{i-1}-2k-(m_{i-1}-2k)/20\leq 19/20\cdot m_{i-1}$.
	Thus, $r=O(\log n)$.
	In each round $i$, the size of $S_i$ is $2k$.
	Then $|S|=|T_r|+\sum_{i=1}^r |S_i|\leq 1000k+r\cdot 2k\leq O(k\log n)$.
	
	Next, let us consider the quality of $S$.
	Since each regression call has $1-1/\poly(n)$ success probability, all the regression calls succeed with probability at least $1-1/\poly(n)$.
	In the remaining of the proof, we condition on that all the regression calls succeed.
	
	Let us fix $i\in[r],j\in[\log n]$.
	Recall that $T_i=\{t_{i,1},t_{i,2},\cdots,t_{i,m_i}\}$ and $\|\Delta_{t_{i,1}}\|_g\geq \|\Delta_{t_{i,2}}\|_g\geq \cdots\geq \|\Delta_{t_{i,m_i}}\|_g$.
	By regression property and Lemma~\ref{lem:low_fitting_cost}, with probability at least $1/4$,
	\begin{align*}
	&\sum_{q\in R_i^{(j)}}\min_{x\in\mathbb{R}^{2k}} \|A_{S^{(j)}_i}x-A_q\|_g\\
	\leq~&
	\sum_{q \in R_{i}^{(j)}} v_{i,q}^{(j)}\\
	 \leq~& \reg_{g,2k}\cdot (m_i-2k)\cdot 500\cdot k\cdot \ati_{g,2k+1}\cdot \mon_g/m_i\cdot \sum_{j'=\frac{m_i}{100k}}^{m_i} \|\Delta_{t_{i,j'}}\|_g\\
	 \leq~& \reg_{g,2k+1}\cdot \ati_{g,2k+1}\cdot \mon_g\cdot O(k)\cdot \sum_{j'=\frac{m_i}{100k}}^{m_i} \|\Delta_{t_{i,j'}}\|_g.
	\end{align*}
	For each $i\in[r]$, since we repeat $\log (n)$ times, the success probability can be boosted to at least $1-1/\poly(r)$, i.e., with probability at least $1-1/\poly(r)$, we have
	\begin{align}\label{eq:small_fitting_cost}
	&\sum_{q\in R_i}\min_{x\in\mathbb{R}^{2k}} \|A_{S_i}x-A_q\|_g
	\leq~& \reg_{g,2k+1}\cdot \ati_{g,2k+1}\cdot \mon_g\cdot O(k)\cdot \sum_{j'=\frac{m_i}{100k}}^{m_i} \|\Delta_{t_{i,j'}}\|_g.
	\end{align}
	In the remaining of the proof, we condition on above inequality for every $i\in[r]$.
	Without loss of generality, we suppose $\|\Delta_1\|_g\geq \|\Delta_2\|_g\geq \cdots \geq \|\Delta_n\|_g$.
	We have 
	\begin{align*}
	&\sum_{q=1}^n \min_{x\in\mathbb{R}^{|S|}}\|A_Sx-A_q\|_g\\
	\leq~& \left(\sum_{q\in T_r}\min_{x\in\mathbb{R}^{|T_r|}} \|A_{T_r}x-A_q\|_g\right)+\left(\sum_{i=1}^r \sum_{q\in T_{i-1}\setminus T_i} \min_{x\in\mathbb{R}^{2k}}\|A_{S_i}x-A_q\|_g\right) \\
	=~ & \sum_{i=1}^r \sum_{q\in T_{i-1}\setminus T_i} \min_{x\in\mathbb{R}^{2k}}\|A_{S_i}x-A_q\|_g\\
	\leq~& \sum_{i=1}^r  \reg_{g,2k+1}\cdot \ati_{g,2k+1}\cdot \mon_g\cdot O(k)\cdot \sum_{j'=\frac{m_i}{100k}}^{m_i} \|\Delta_{t_{i,j'}}\|_g\\
	\leq~& \sum_{i=1}^r \reg_{g,2k+1}\cdot \ati_{g,2k+1}\cdot \mon_g\cdot O(k)\cdot \sum_{j'=\frac{m_i}{100k}}^{m_i} \|\Delta_{j'}\|_g\\
	= ~&  \reg_{g,2k+1}\cdot \ati_{g,2k+1}\cdot \mon_g\cdot O(k)\cdot  \sum_{j'= 1 }^{n} \| \Delta_{j'} \|_g   \left( \argmin_{i \in [r] } \left\{ m_i <j' \right\}  - \argmin_{i\in[r]} \left\{ \frac{ m_i }{ 100k } < j' \right\} + O(1) \right) \\
	\leq~ & \reg_{g,2k+1}\cdot \ati_{g,2k+1}\cdot \mon_g\cdot O(k\log k) \cdot \|\Delta\|_g,
	\end{align*}	
	where the third step follows from $T_{i-1}\setminus T_i = S_i\cup R_i$ and Equation~\eqref{eq:small_fitting_cost}, the forth step follows from $\|\Delta_{j'}\|_g\geq \|\Delta_{t_{i,j'}}\|_g$, and the last step follows from $\left( \argmin_{i \in [r] } \left\{ m_i <j' \right\}  - \argmin_{i\in[r]} \left\{ \frac{ m_i }{ 100k } < j' \right\} + O(1) \right)\leq O(\log k)$.

\end{proof}

\section{Necessity of the Properties of $g$}\label{sec:necessity}

We note that an approximate triangle inequality is necessary to obtain
a column subset selection algorithm. An example function not satisfying
this is the ``jumping function'': $g_{\tau}(x) = |x|$ if $|x| \geq \tau$,
and $g_{\tau}(x) = 0$ otherwise. For the identity matrix $I$
and any $k = \Omega(\log n)$, the Johnson-Lindenstrauss lemma implies
one can find a rank-$k$
matrix $B$ for which $\|I-B\|_{\infty} < 1/2$, that is, all entries
of $I-B$ are at most $1/2$. If we set $\tau = 1/2$, then
$\|I-B\|_{g_{\tau}} = 0$, but for any subset $I_S$ of columns of the identity
matrix we choose, necessarily $\|I-I_SX\|_{\infty} \geq 1$,
so $\|I-B\|_{g_{\tau}} > 0$. Consequently, there is no subset of
a small number of columns which obtains a $\poly(k \log n)$-approximation
with the jumping function loss measure.

While the jumping function does not satisfy the Approximate triangle
inequality, it does satisfy our only other required
structural property, the Monotone
property.

There are interesting examples of functions $g$ which are only approximately
monotone in the above sense, such as the quantile function $\rho_{\tau}(x)$,
studied in \cite{ymm14} in the context of regression, where for a given
parameter
$\tau$, $\rho_{\tau}(x) = \tau x$ if $x \geq 0$, and
$\rho_{\tau}(x) = (\tau -1)x$ if $x < 0$. Only when $\tau = 1/2$
is this a monotone function with
$\sym_g = 1$ in the above definition, in which case
it coincides with the absolute value function up to a factor of $1/2$. For
other constant $\tau \in (0,1)$, $\sym_g$ is a constant. The loss
function $\rho_{\tau}(x)$ is also sometimes
called the scalene loss, and studied in the context of low rank approximation
in \cite{uhzb16}. 

When $\tau = 1$ this is the so-called Rectified Linear Unit (ReLU)
function in machine learning, i.e., $\rho_1(x) = x$ if $x \geq 0$ and
$\rho_1(x) = 0$ if $x < 0$. In this case $\sym_g = \infty$.
and the optimal rank-$k$ approximation for
any matrix $A$ is $0$, since $\|A-\lambda {\bf 1} {\bf 1}^\top \|_{\rho_1} = 0$ if one
sets $\lambda$ to be a large enough positive number, thereby making
all entries of $A-\lambda {\bf 1} {\bf 1}^\top$ negative and their corresponding cost equal
to $0$. Notice though, that there are no good column subset selection
algorithms for some matrices $A$, such as the $n \times n$ identity matrix.
Indeed, for the identity, if we choose any subset $A_S$
of at most $n-1$ columns of $A$,
then for any matrix $X$ there will be an entry of $A-A_SX$ which is positive,
causing the cost to be positive. Since we will restrict ourselves to
column subset selection, being approximately monotone with
a small value of $\sym_g$ in the above definition is in fact necessary to
obtain a good approximation with a small number of columns,
as the ReLU function illustrates (see also related functions such as the
leaky ReLU and squared ReLU \cite{zsjbd17,bhl18,gkm18}).

Note that the ReLU function is an example which satisfies the triangle
inequality, showing that our additional assumption of approximate monotonicity
is required.

Thus, if either property fails to hold, there need not be a small subset
of columns spanning a relative error approximation.
These examples are stated in more detail below.

\subsection{Functions without Approximate Triangle Inequality}\label{sec:no_ati} 

In this section, we show how to construct a function $f$ such that it is not possible to obtain a good entrywise-$f$ low rank approximation by selecting a small subset of columns. Furthermore, $f$ is monotone but does not have the approximate triangle inequality. Theorem~\ref{thm:f_without_triangle_inequality_is_no_good_subset} shows this result.

First, we show that a small subset of columns cannot give a good low rank approximation in $\ell_{\infty}$ norm.
Then we reduce the $\ell_{\infty}$ column subset selection problem to the entrywise-$f$ column subset selection problem.

The following is the Johnson-Lindenstrauss lemma.
\begin{lemma}[JL Lemma]\label{lem:easy_to_approximate_identity}
For any $n\geq 1,\varepsilon\in(1/\sqrt{n},1/2),$ there exists $U\in\mathbb{R}^{n\times k}$ with $k=O(\varepsilon^{-2}\log(n))$ such that $\|UU^{\top}-I_n\|_{\infty}\leq O(\varepsilon),$ where $I_n\in\mathbb{R}^{n\times n}$ is an identity matrix.
\end{lemma}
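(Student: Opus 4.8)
The plan is to exhibit an explicit random construction and show that $UU^\top$ concentrates \emph{entrywise} around $I_n$. Let $U\in\mathbb{R}^{n\times k}$ have i.i.d.\ entries drawn from $N(0,1/k)$, and write $u_1,\dots,u_n\in\mathbb{R}^k$ for its rows, so that $(UU^\top)_{i,j}=\langle u_i,u_j\rangle$. It then suffices to bound $|\langle u_i,u_i\rangle-1|$ and $|\langle u_i,u_j\rangle|$ (for $i\ne j$) by $O(\varepsilon)$ simultaneously over all $i,j\in[n]$. For the diagonal, $k\|u_i\|_2^2$ is a $\chi^2$ variable with $k$ degrees of freedom, so standard chi-squared concentration gives $\Pr[\,|\|u_i\|_2^2-1|>\varepsilon\,]\le 2\exp(-\Omega(\varepsilon^2 k))$. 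For the off-diagonal terms, I would condition on $u_j$: by independence of $u_i$ and $u_j$, the quantity $\langle u_i,u_j\rangle$ is then distributed as $N(0,\|u_j\|_2^2/k)$, and on the event $\|u_j\|_2^2\le 2$ (which holds except with the probability above) a Gaussian tail bound yields $\Pr[\,|\langle u_i,u_j\rangle|>\varepsilon\,]\le 2\exp(-\Omega(\varepsilon^2 k))$. A union bound over the $n$ diagonal and fewer than $n^2$ off-diagonal events succeeds provided $2n^2\exp(-\Omega(\varepsilon^2 k))<1$, i.e.\ provided $k\ge C\varepsilon^{-2}\log n$ for a suitable constant $C$; on the resulting good event $\|UU^\top-I_n\|_\infty\le\varepsilon$, which is the claim up to the constant factor. (In the degenerate sub-range where this $k$ exceeds $n$ one may simply take $U$ with orthonormal columns, making $UU^\top$ a coordinate projection, trivially verifying the bound.)

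An alternative route, which I would actually prefer for brevity, is to invoke the standard Johnson–Lindenstrauss lemma as a black box: there is a distribution over linear maps $\Pi:\mathbb{R}^n\to\mathbb{R}^k$ with $k=O(\varepsilon^{-2}\log n)$ such that, with positive probability, $\Pi$ simultaneously preserves the Euclidean norms of all $n+\binom{n}{2}$ vectors $e_i$ and $e_i-e_j$ up to a factor $1\pm\varepsilon$. Fix such a $\Pi$ and let $U$ be the matrix whose $i$-th row is $(\Pi e_i)^\top$. Then $(UU^\top)_{i,j}=\langle \Pi e_i,\Pi e_j\rangle$, and the polarization identity $\langle \Pi e_i,\Pi e_j\rangle=\frac{1}{2}\bigl(\|\Pi e_i\|_2^2+\|\Pi e_j\|_2^2-\|\Pi e_i-\Pi e_j\|_2^2\bigr)$, combined with $\|e_i\|_2=1$ and $\|e_i-e_j\|_2^2=2$, shows each diagonal entry of $UU^\top$ lies in $[1-\varepsilon,1+\varepsilon]$ and each off-diagonal entry lies in $[-2\varepsilon,2\varepsilon]$. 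Hence $\|UU^\top-I_n\|_\infty\le 2\varepsilon=O(\varepsilon)$, as required.

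There is no serious obstacle here; the result is essentially a restatement of Johnson–Lindenstrauss. The only points that require a little care are (i) that the conditioning step for the off-diagonal inner products is legitimate, which it is precisely because distinct rows $u_i,u_j$ are independent, and (ii) that the union bound is quantitatively valid throughout the stated range $\varepsilon\in(1/\sqrt n,1/2)$ — this is exactly what the choice $k=\Theta(\varepsilon^{-2}\log n)$ is designed to guarantee.
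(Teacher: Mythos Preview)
Your proposal is correct. The paper does not actually supply a proof of this lemma: it is stated with the tag ``JL Lemma'' and used as a black box, exactly in the spirit of your second route. Both of your arguments are standard and valid derivations of the entrywise form of Johnson--Lindenstrauss; the polarization-identity version is closest to how the paper treats the statement, namely as an immediate consequence of the classical JL guarantee applied to the standard basis vectors. One small wording slip in your parenthetical: when $k\ge n$ you want $U$ to have orthonormal \emph{rows} (or, equivalently, take $U=[\,I_n\ \ 0\,]$), so that $UU^\top=I_n$ exactly; with orthonormal columns $UU^\top$ would be a rank-$k$ projection, which is the wrong direction.
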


\begin{theorem}\label{thm:ell_infty_hard}
For $n\geq 1,$ there is a matrix $A\in \R^{n \times n}$ with the following properties. Let $k = \Theta(\varepsilon^{-2}\log(n))$ for an arbitrary $\varepsilon\in(1/\sqrt{n},1/2)$. Let $D\in \R^{n \times n}$ denote a diagonal matrix with $n-1$ nonzeros on the diagonal. We have
\begin{align*}
\min_{X \in \R^{n\times n}} \| X D A - A \|_{\infty} \geq 1
\end{align*}
and
\begin{align*}
\min_{\rank-k~A'} \| A' - A \|_{\infty}<O(\varepsilon).
\end{align*}
\end{theorem}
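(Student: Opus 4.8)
The plan is to take $A = I_n$, the $n \times n$ identity matrix, and verify the two bounds separately and directly; this is exactly the $\ell_\infty$ hard instance already previewed in the discussion of the necessity of the approximate triangle inequality.

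For the lower bound $\min_{X \in \R^{n\times n}} \| X D A - A \|_{\infty} \geq 1$, I would use that $D$ is diagonal with exactly $n-1$ nonzero entries, so there is a unique index $i \in [n]$ with $D_{ii} = 0$. Then $DA = D I_n = D$, whose $i$-th column is the zero vector, and hence for every $X$ the $i$-th column of $XDA$ is zero as well. Therefore the $(i,i)$ entry of $XDA - A$ equals $0 - (I_n)_{ii} = -1$, so $\| X D A - A \|_{\infty} \geq 1$ for all $X$. Intuitively: a matrix whose rows lie in the span of only $n-1$ rows of $I_n$ is blind to one coordinate direction.

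For the upper bound I would invoke the Johnson--Lindenstrauss lemma (Lemma~\ref{lem:easy_to_approximate_identity}): for the given $\varepsilon \in (1/\sqrt n, 1/2)$ there is $U \in \R^{n \times k}$ with $k = O(\varepsilon^{-2}\log n)$ and $\|UU^\top - I_n\|_\infty \leq O(\varepsilon)$. Taking $A' = UU^\top$, which has rank at most $k$, gives $\min_{\rank-k~A'}\|A' - A\|_\infty \leq \|UU^\top - I_n\|_\infty < O(\varepsilon)$. Since the lower bound is independent of $k$ and the rank-$k$ guarantee only improves as $k$ grows, the statement holds for any $k = \Theta(\varepsilon^{-2}\log n)$ at least the JL threshold, so the two constants in the $\Theta$ can be matched with no difficulty. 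There is essentially no hard step here; the real work of the section is the subsequent reduction that lifts this $\ell_\infty$ obstruction to a monotone function $f$ lacking the approximate triangle inequality (Theorem~\ref{thm:f_without_triangle_inequality_is_no_good_subset}).
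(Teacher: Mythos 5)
Your proposal is correct and follows essentially the same route as the paper: $A=I_n$, the JL lemma (Lemma~\ref{lem:easy_to_approximate_identity}) for the rank-$k$ upper bound, and the observation that with a zero diagonal entry of $D$ one column of $XDA$ is forced to be zero, so some diagonal entry of $XDA-A$ has absolute value $1$. Your write-up of the lower bound is just a more explicit version of the paper's one-line argument that $n-1$ columns of the identity cannot fit the remaining one.
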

\begin{proof}
We choose $A$ to be the identity matrix. %Let $\OPT$ denote
%\begin{align*}
%\OPT = \min_{\rank-k~A'} \| A' - A \|_{\infty}.
%\end{align*}
%We first show that $\OPT \lesssim \sqrt{ (\log n) /k}$. We construct matrix $B\in \R^{n\times n}$ in the following way,
By Lemma~\ref{lem:easy_to_approximate_identity}, we can find a rank-$k$ matrix $B$ for which 
\begin{align*}
 \|A-B\|_{\infty}\leq O(\varepsilon).
\end{align*}

Since $A$ is an $n\times n$ identity matrix, even if we can use $n-1$ columns to fit the other columns, the cost is still at least $1$. %Therefore, we have

\end{proof}

In the following, we state the construction of our function $f$.
\begin{definition}\label{def:hardness_jumping_function}
We define function $f(x)$ to be $f(x) = c$ if $|x| > \tau$ and $f(x) = 0$ if $|x|\leq \tau$. Given matrix $A$, we define $\| A\|_f = \sum_{i=1}^{n} \sum_{j=1}^{n} f(A_{i,j})$.
\end{definition}

\begin{theorem}[No good subset of columns]\label{thm:f_without_triangle_inequality_is_no_good_subset}
For any $n\geq 1$, there is a matrix $A\in \R^{n \times n}$ with the following property. Let $k\geq c\log n $ for a sufficiently large constant $c>0$. Let $D\in \R^{n \times n}$ denote an arbitrary diagonal matrix with $n-1$ nonzeros on the diagonal. For $f$ with %sufficiently large constant
parameter $\tau=1/4,$  we have
\begin{align*}
\min_{X \in \R^{n\times n}} \| X D A - A \|_f >0
\end{align*}
and
\begin{align*}
\min_{\rank-k~A'} \| A' - A \|_{f}=0.
\end{align*}
\end{theorem}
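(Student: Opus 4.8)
The plan is to reduce to the $\ell_\infty$ lower bound of Theorem~\ref{thm:ell_infty_hard}. I would take $A$ to be the $n \times n$ identity matrix, the same hard instance used there. The two halves of the statement are then handled separately and the work is mostly bookkeeping, since all the real content is already in Theorem~\ref{thm:ell_infty_hard} and the JL Lemma (Lemma~\ref{lem:easy_to_approximate_identity}).

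For the upper bound $\min_{\rank\text{-}k~A'}\|A'-A\|_f = 0$: since $k \geq c\log n$, by Lemma~\ref{lem:easy_to_approximate_identity} (with a suitable constant $\varepsilon < \tau = 1/4$, e.g.\ $\varepsilon$ a small enough constant so that $O(\varepsilon) < 1/4$, which forces $k = O(\varepsilon^{-2}\log n) = O(\log n)$ columns — consistent with $k \geq c\log n$ for large $c$) there is a rank-$k$ matrix $B = UU^\top$ with $\|A - B\|_\infty \leq O(\varepsilon) < 1/4 = \tau$. Then every entry of $A - B$ has absolute value at most $\tau$, so $f$ of every entry is $0$, giving $\|A-B\|_f = 0$ and hence the minimum over rank-$k$ matrices is $0$.

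For the lower bound $\min_{X}\|XDA - A\|_f > 0$: the matrix $DA$ has at most $n-1$ nonzero columns (since $D$ is diagonal with $n-1$ nonzeros and $A = I$, so $DA$ is $D$ itself, a diagonal matrix with one zero column), so $XDA$ ranges over matrices whose columns lie in the span of at most $n-1$ standard basis vectors — equivalently, there is some coordinate $i_0$ such that $XDA$ has a prescribed structure forcing the $i_0$-th column. Concretely, let $i_0$ be the index of the zero diagonal entry of $D$; then the $i_0$-th column of $XDA$ is the zero vector regardless of $X$, while the $i_0$-th column of $A = I$ is $e_{i_0}$. Hence $(XDA - A)_{i_0,i_0} = -1$, which has absolute value $1 > \tau = 1/4$, so $f$ of that entry equals $c > 0$, and therefore $\|XDA - A\|_f \geq c > 0$ for every $X$. (Alternatively, one invokes Theorem~\ref{thm:ell_infty_hard} directly: it gives $\|XDA - A\|_\infty \geq 1 > \tau$, so some entry exceeds $\tau$ in absolute value and contributes $c$ to the $f$-cost.)

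The main obstacle — and it is a mild one — is just to make sure the translation between ``column subset selection of size $n-1$'' and ``$XDA$ with $D$ having $n-1$ nonzeros'' is set up cleanly, and that the constant $\varepsilon$ in the JL application is chosen small enough that $O(\varepsilon) < \tau$ while keeping $k = \Theta(\varepsilon^{-2}\log n)$ compatible with the hypothesis $k \geq c\log n$ for a sufficiently large constant $c$. Once the instance is fixed to the identity and $\tau = 1/4$, both inequalities follow immediately from the two cited results with no further computation.
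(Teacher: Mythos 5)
Your proposal is correct and follows essentially the same route as the paper: take $A = I_n$, use the JL-based Theorem~\ref{thm:ell_infty_hard} (with a small enough constant $\varepsilon$ so the rank-$k$ approximation has all entries of $A-B$ below $\tau=1/4$, giving $f$-cost $0$), and observe that the zero diagonal entry of $D$ forces some entry of $XDA-A$ to have absolute value $1 > \tau$, so the $f$-cost is positive. Your explicit identification of the zeroed column is just an unpacking of the paper's appeal to $\min_X\|XDA-A\|_\infty \geq 1$, so there is no substantive difference.
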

\begin{proof}
We can set $A$ to be the identity matrix.
Due to Theorem~\ref{thm:ell_infty_hard}, there exists $A'$ for which $\min_{\rank-k~A'} \| A' - A \|_{\infty}<1/4$, which implies that $\min_{\rank-k~A'} \| A' - A \|_{f}=0.$
Also due to Theorem~\ref{thm:ell_infty_hard}, we have $\min_{X \in \R^{n\times n}} \| X D A - A \|_\infty=1,$ and thus, 
$
\min_{X \in \R^{n\times n}} \| X D A - A \|_f >0.
$

%It follows directly from Claim~\ref{cla:f_low_rank_is_equivalent_to_inf_low_rank} and  Theorem~\ref{thm:ell_infty_hard}.
\end{proof}

\subsection{$\relu$ Function Low Rank Approximation}\label{sec:no_mon}
In this section, we discuss a function which has the approximate triangle inequality but is not monotone.
The specific function we discuss in this section is $\relu.$
The definition of $\relu$ is defined in Definition~\ref{def:hardness_relu_function}.
First, we show that $\relu$ low rank approximation has a trivial best rank-$k$ approximation.
Second, we show that for some matrices, there is no small subset of columns which can give a good low rank approximation.

\begin{definition}\label{def:hardness_relu_function}
We define function $\relu(x)$ to be $\relu(x) = \max(0,x)$. Given matrix $A$, we define $\| A\|_{\relu} = \sum_{i=1}^{n} \sum_{j=1}^{n} \relu(A_{i,j})$.
\end{definition}

In the rank-$k$ approximation problem, given an input matrix $A$, the goal is to find a rank-$k$ matrix $B$ for which $\|A-B\|_{\relu}$ is minimized. A simple observation is that if we set $B$ to be a matrix with each entry of value $\|A\|_{\infty},$ then the value of each entry of $A-B$ is at most $0$. Thus, $\|A-B\|_{\relu}=0.$ Furthermore, the rank of $B$ is $1$.

Now, consider the column subset selection problem, let input matrix $A\in\mathbb{R}^{n\times n}$ be an identity matrix.
Then even if we can choose $n-1$ columns, they can never fit the remaining column.
Thus, the cost is at least $1$.
But as discussed, the best rank-$k$ cost is always $0$.
This implies that any subset of columns cannot give a good rank-$k$ approximation.

\section{Regression Solvers}\label{sec:regression}

In this section, we discuss several regression solvers.

\subsection{Regression for Convex $g$}
Notice that when the function $g$ is convex, the regression problem $\min_{X\in \mathbb{R}^{d\times m}}\|AX-B\|_g$ for any given matrices $A\in\mathbb{R}^{n\times d},B\in\mathbb{R}^{n\times m}$ is a convex optimization problem. Thus, it can be solved exactly by convex optimization algorithms.
\begin{fact}
Let $g$ be a convex function. Given $A\in\mathbb{R}^{n\times d},B\in\mathbb{R}^{n\times m},$ the regression problem $\min_{X\in\mathbb{R}^{d\times m}} \|AX-B\|_g$ can be solved exactly by convex optimization in $\poly(n,d,m)$ time.
\end{fact}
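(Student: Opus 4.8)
The plan is to reduce the matrix regression problem to $m$ independent vector regression problems and then invoke a standard polynomial-time convex programming routine on each. First I would observe that the objective is separable across the columns of $X$: writing $B_i$ for the $i$-th column of $B$,
\begin{align*}
\min_{X \in \mathbb{R}^{d \times m}} \|AX - B\|_g = \sum_{i=1}^{m} \min_{x \in \mathbb{R}^{d}} \|Ax - B_i\|_g ,
\end{align*}
so it suffices to solve a single instance $\min_{x \in \mathbb{R}^{d}} \|Ax - b\|_g$ in $\poly(n,d)$ time and repeat this $m$ times.

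Second I would check convexity of the single-column objective. Set $h(x) := \|Ax - b\|_g = \sum_{j=1}^{n} g\big(\langle A^j, x\rangle - b_j\big)$, where $A^j$ is the $j$-th row of $A$. Each summand is the composition of the convex function $g$ with the affine map $x \mapsto \langle A^j, x\rangle - b_j$, and the composition of a convex function with an affine map is convex; since a finite sum of convex functions is convex, $h$ is convex on $\mathbb{R}^{d}$. Thus each of the $m$ subproblems is an unconstrained convex minimization problem in $d$ variables whose objective can be evaluated (and subgradients obtained) in $O(nd)$ time given an oracle for $g$.

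Third, I would apply a polynomial-time convex optimization method — the ellipsoid method using a subgradient/evaluation oracle for $g$, or an interior-point method when $g$ is sufficiently structured — to minimize $h$, intersecting the domain with a ball of radius $2^{\poly(n,d)}$ so that the standard ellipsoid guarantees apply without changing the (attained) optimum. Each subproblem is then solved in time polynomial in $n$, $d$, and the input bit complexity, and summing over the $m$ columns gives the claimed $\poly(n,d,m)$ bound.

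The only point requiring care — more a caveat than a genuine obstacle — is the word ``exactly'': generic convex optimization returns an $\epsilon$-optimal point in time depending on $\log(1/\epsilon)$, so ``exactly'' here should be read as ``to within any desired inverse-polynomial accuracy'', which is all that is needed downstream; alternatively one restricts to structured $g$ (e.g.\ piecewise linear $g$, yielding a linear program, or piecewise quadratic $g$, yielding a quadratic program) for which exact polynomial-time solvers are available. One should also note in passing that the minimum is attained (or the reduction to a bounded ball is made) so that the optimization is well posed.
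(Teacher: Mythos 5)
Your proposal is correct and follows essentially the same route as the paper, which simply observes that $\|AX-B\|_g$ is a convex objective (a sum of convex functions composed with affine maps) and invokes a generic polynomial-time convex optimization solver; your column-wise decomposition and convexity check are just the standard details behind that one-line justification. Your caveat about ``exactly'' versus $\epsilon$-accuracy is a fair and worthwhile clarification of a point the paper glosses over.
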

If a function $g$ has additional properties, i.e. 
$g$ is symmetric, monotone and grows subquadratically,
then there is a better running time constant approximation algorithm shown in~\cite{cw15soda}. 
Here ``grows quadratically'' means that there is an $\alpha\in [1,2]$ and $c_g>0$ so that for $a,a'$ with $|a|>|a'|>0,$
\begin{align*}
\left|\frac{a}{a'}\right|^{\alpha}\geq \frac{g(a)}{g(a')}\geq c_g \left|\frac{a}{a'}\right|.
\end{align*}
This kind of function $g$ is also called a ``sketchable'' function.
Notice that the Huber function satisfies the above properties.

\begin{theorem}[Modified version of Theorem 3.1 of \cite{cw15soda}]\label{thm:general_regression}
 Function $g$ is symmetric, monotone and grows subquadratically ($g$ is a $G$-function defined by \cite{cw15soda}). Given a matrix $A\in \mathbb{R}^{n\times d}$ and a matrix $B\in\mathbb{R}^{n\times m},$
 there is an algorithm which can output a matrix $\wh{X}\in\mathbb{R}^{d\times m}$ and a fitting cost vector $y\in\mathbb{R}^m$ such that with probability at least $1-1/\poly(nm),$
 $\forall i\in[m],\|A\wh{X}_i-B_i\|_g\leq O(1)\cdot \min_{x\in\mathbb{R}^d}\|Ax-B_i\|_g,$ and $y_i=\Theta(\|A\wh{X}_i-B_i\|_g).$ Furthermore, the running time is at most $\wt{O}(\nnz(A)+\nnz(B)+m\cdot\poly(d\log n))$.
\end{theorem}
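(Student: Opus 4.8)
The plan is to derive this from Theorem 3.1 of \cite{cw15soda} and then upgrade it along two axes: from a single right-hand side to the $m$ columns of $B$ handled by one shared sketch, and from constant success probability to $1-1/\poly(nm)$. Recall that the construction behind Theorem 3.1 of \cite{cw15soda} is an \emph{oblivious} sketching matrix $S\in\R^{t\times n}$ with $t=\poly(d\log n)$ (a composition of a sparse-embedding / \textsc{CountSketch}-type map with a sampling step tailored to the $G$-function $g$) with two properties: (i) applying $S$ to a vector takes time proportional to its number of nonzeros plus $\poly(d)$, and the sampling step depends only on $A$, not on the right-hand side; and (ii) for any fixed $b\in\R^n$, with good probability an affine-embedding-type guarantee $\|S(Ax-b)\|_h=\Theta(1)\cdot\|Ax-b\|_g$ holds \emph{simultaneously for all} $x\in\R^d$, where $\|\cdot\|_h$ is the associated sketched loss that can be minimized over $x$ in $\poly(d\log n)$ time.

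The algorithm is then: compute $SA$ once in $\nnz(A)+\poly(d)$ time; for each $i\in[m]$, compute $SB_i$ in $\nnz(B_i)+\poly(d)$ time, set $\wh X_i\gets\argmin_{x\in\R^d}\|SAx-SB_i\|_h$ (a $\poly(d\log n)$-size problem), and set $y_i$ to a fixed constant multiple of $\|SA\wh X_i-SB_i\|_h$. Since $\sum_i\nnz(B_i)=\nnz(B)$ and $t=\poly(d\log n)$, the total time is $\wt O(\nnz(A)+\nnz(B)+m\cdot\poly(d\log n))$. For correctness at a fixed $i$, condition on guarantee (ii) for the right-hand side $B_i$, let $x^*_i$ attain $\OPT_i=\min_x\|Ax-B_i\|_g$, and chain the ``for all $x$'' property at $x=\wh X_i$ and at $x=x^*_i$ with optimality of $\wh X_i$ for the sketched objective: $\|A\wh X_i-B_i\|_g\le O(1)\,\|SA\wh X_i-SB_i\|_h\le O(1)\,\|SAx^*_i-SB_i\|_h\le O(1)\,\OPT_i$. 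The same inequalities give $\|SA\wh X_i-SB_i\|_h=\Theta(\|A\wh X_i-B_i\|_g)$, so $y_i$ is the desired $\Theta(1)$-estimate of the fitting cost.

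It remains to handle the failure probability. The guarantee (ii) of \cite{cw15soda} holds with probability $1-1/\poly(d)$ once $t=\poly(d\log n)$, and taking instead $t=\poly(d\log(nm))$ (still $\poly(d\log n)$ in the regime $m=\poly(n)$ relevant here) pushes the per-right-hand-side failure probability below $1/\poly(nm)$; since $S$ is oblivious, one union bound over the $m$ columns then yields overall success probability $1-1/\poly(nm)$ without changing the per-column solve time. (Alternatively one can run $O(\log(nm))$ independent sketches and, for each $i$, keep the candidate with the smallest estimated cost $y_i$.) The step I expect to be the main obstacle is exactly this amplification: verifying that the affine-embedding guarantee of \cite{cw15soda} survives boosting to $1-1/\poly(nm)$ with only a $\poly(\log(nm))$ blow-up in $t$, so that for each of the $m$ fixed right-hand sides it still holds simultaneously over all $x$, together with checking that a single sketch applied to all of $A$ and $B$ respects input sparsity.
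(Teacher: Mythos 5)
Your overall skeleton matches the paper's intent: reduce to the single right-hand-side guarantee of Theorem 3.1 of \cite{cw15soda}, amplify the per-column success probability to $1-1/\poly(nm)$, and union bound over the $m$ columns. In fact you are more explicit than the paper about the running time: the paper's proof just says to run the single-column algorithm for every column, which read literally would cost $m\cdot\nnz(A)$; your observation that the sketch is oblivious, so $SA$ is computed once and only $SB_i$ is per-column work, is what actually makes the $\wt{O}(\nnz(A)+\nnz(B)+m\cdot\poly(d\log n))$ bound go through.

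The genuine gap is in your amplification step, and it is exactly the one you flagged. Theorem 3.1 of \cite{cw15soda} gives its guarantee only with \emph{constant} probability, and nothing in that statement lets you drive the failure probability down to $1/\poly(nm)$ merely by enlarging the sketch to $\poly(d\log(nm))$ rows; the constant-probability events in its analysis (in particular the dilation/no-overestimation side for the fixed optimum) are not shown to decay polynomially in the sketch size, so your primary route is unsupported as a black-box use of the theorem. Your fallback --- $O(\log(nm))$ independent sketches, keeping for each $i$ the candidate with the \emph{smallest} estimated cost --- is also unsafe as stated: on a failed repetition the estimate $y_i$ carries no guarantee and may badly underestimate the true cost of that repetition's (possibly poor) solution, and with $O(\log(nm))$ repetitions some repetition fails with probability close to $1$, so minimum-selection can latch onto a bad candidate. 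The paper instead takes, among the $O(\log(nm))$ repetitions, the solution whose \emph{estimated} cost is the \emph{median}; by a Chernoff bound a majority of repetitions succeed, which protects the median choice and boosts each column to success probability $1-1/\poly(nm)$ before the union bound. Replacing your selection rule by the median (or adding an independent high-probability cost estimator to validate candidates) closes the gap; the rest of your argument, including the chaining of the sketched optimality at $\wh{X}_i$ and at $x_i^*$, is the standard and correct sketch-and-solve reasoning.
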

\begin{proof}
We run $O(\log (nm))$ repetitions of the single column regression algorithm shown in Theorem 3.1 of \cite{cw15soda} for all columns $B_i$ for $i\in[m]$.
For each regression problem $\|Ax-B_i\|_g,$ we take the solution whose estimated cost is the median among these $O(\log (nm)$ repetitions as $\wh{X}_i$. 
Then by the Chernoff bound, we can boost the success probability of each column to $1-1/\poly(nm).$
By taking a union bound over all columns, we complete the proof.
\end{proof}

\subsection{$\ell_p$ Regression}
One of the most important cases in regression and low rank approximation problems is when the error measure is $\ell_p$.
For $\ell_p$ regression, though it can be solved by convex optimization/linear programming exactly, we can get a much faster running time if we allow some approximation ratios. In the following theorem, we show that there is an algorithm which can be used to solve $\ell_p$ regression for any $p\geq 1.$

\begin{theorem}[Modified version of~\cite{wz13}]\label{thm:l1_regression}
Let $p\geq 1,\varepsilon\in(0,1).$ Given a matrix $A\in \mathbb{R}^{n\times d}$ and a matrix $B\in\mathbb{R}^{n\times m},$ there is an algorithm which can output a matrix $\wh{X}\in\mathbb{R}^{d\times m}$ and a fitting cost vector $y\in\mathbb{R}^m$ such that with probability at least $1-1/\poly(nm),$
 $\forall i\in[m],\|A\wh{X}_i-B_i\|_p^p\leq (1+\varepsilon)\cdot \min_{x\in\mathbb{R}^d}\|Ax-B_i\|_p^p,$ and $y_i=\Theta(\|A\wh{X}_i-B_i\|_p^p).$ Furthermore, the running time is at most $\wt{O}(\nnz(A)+\nnz(B)+mn^{\max(1-2/p,0)}\cdot\poly(d))$.
\end{theorem}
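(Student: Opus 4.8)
The plan is to reduce the multi-column problem to single-column $\ell_p$ regression, which is handled by \cite{wz13}, and then amplify the success probability by repetition, mirroring the proof of Theorem~\ref{thm:general_regression}.

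First I would recall the single-column guarantee from \cite{wz13}: for a fixed $b\in\R^n$, there is an algorithm producing in $\wt O(\nnz(A)+\nnz(b)+n^{\max(1-2/p,0)}\poly(d))$ time a vector $\wh x$ with $\|A\wh x-b\|_p^p\leq(1+\varepsilon)\min_x\|Ax-b\|_p^p$ with constant probability. At a high level this proceeds in two stages: (i) apply a dimension-reducing $\ell_p$ sketch $\Pi$ --- a sparse embedding when $p\in[1,2]$, reducing to $\poly(d)$ rows, or an embedding built from max-stable / exponential random variables when $p>2$, reducing to $n^{1-2/p}\poly(d)$ rows --- which preserves norms in the relevant column span up to a $\poly(d)$ factor; and (ii) on the reduced instance, compute a constant-factor solution by convex programming and refine it to a $(1+\varepsilon)$-approximation by sampling rows proportional to $\ell_p$ Lewis weights combined with the residual of the constant-factor solution (equivalently, using a well-conditioned basis), yielding a cost-preserving sample of $\poly(d/\varepsilon)$ rows. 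The key point is that $\Pi$, which dominates the cost, depends only on $A$ and can be computed once, after which the $b$-dependent refinement operates on the reduced instance and costs only $n^{\max(1-2/p,0)}\poly(d)$ per right-hand side.

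To get the multi-column bound I would compute $\Pi A$ (and the constant-factor conditioning of the column span of $A$) once in $\wt O(\nnz(A)+n^{\max(1-2/p,0)}\poly(d))$ time, then for each $i\in[m]$ form $\Pi B_i$ (total $\wt O(\nnz(B))$ over all columns) and run the refinement to obtain $\wh X_i$ in $n^{\max(1-2/p,0)}\poly(d)$ time, which sums to the claimed bound. For the cost vector I would, after obtaining $\wh X_i$, estimate $\|A\wh X_i-B_i\|_p^p$ using an independent $\ell_p$-norm sketch (Count-Sketch for $p=2$, a $p$-stable sketch for $p<2$, a max/exponential sketch for $p>2$), which gives a $\Theta(1)$-factor estimate with constant probability; computing $A\wh X_i-B_i$ explicitly for every $i$ would be too slow, so the sketch-based estimate is essential to the running time.

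Finally, as in the proof of Theorem~\ref{thm:general_regression}, I would run $O(\log(nm))$ independent repetitions of the whole procedure and, for each column $i$, output the solution whose estimated cost $y_i$ is the median of the estimates over the repetitions; a Chernoff bound boosts the per-column failure probability to $1/\poly(nm)$ and a union bound over the $m$ columns yields the stated success probability, simultaneously certifying $y_i=\Theta(\|A\wh X_i-B_i\|_p^p)$. I expect the main obstacle to be the running-time accounting --- verifying that the per-column work is genuinely $n^{\max(1-2/p,0)}\poly(d)$, which relies on the $b$-independence of the dimension-reducing sketch and, for $p>2$, on the $n^{1-2/p}\poly(d)$ embedding dimension of \cite{wz13}; once the single-column solver and these sketch sizes are in hand, the approximation and probability arguments are routine.
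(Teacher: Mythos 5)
Your proposal is correct and follows essentially the same route as the paper: the paper's proof simply invokes the single-column solver of \cite{wz13} as a black box and, exactly as in the proof of Theorem~\ref{thm:general_regression}, runs $O(\log(nm))$ independent repetitions per column, takes the solution with median estimated cost, and applies a Chernoff bound plus a union bound over the $m$ columns. Your additional discussion of the internals of \cite{wz13} (the $b$-independent sketch and per-column refinement justifying the $n^{\max(1-2/p,0)}\poly(d)$ per-column cost) is a more detailed accounting of the same running-time claim, not a different argument.
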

\begin{proof}
As in the proof of Theorem~\ref{thm:general_regression}, we only need to run $O(\log (nm))$ repetitions of the single column regression algorithm shown in~\cite{wz13}.
\end{proof}

\subsection{$\ell_0$ Regression}

\begin{definition}[Regular partition]\label{def:l0_regular_partition}
Given a matrix $A\in \R^{n \times k}$, we say $\{ S_1, S_2, \cdots, S_h\}$ is a regular partition for $[n]$ with respect to the matrix $A$ if, for each $i \in [h]$,
\begin{align*}
\rank( A^{S_i} ) = |S_i|, \text{~and~} \mathrm{rowspan}(A^{S_i}) = \mathrm{rowspan} \left( A^{\cup_{j=i}^h S_j }  \right),
\end{align*}
where $A^{S_i} \in \R^{|S_i| \times k}$ denotes the matrix that selects a subset $S_i$ of rows of the matrix $A$.
\end{definition}

\begin{algorithm}
\begin{algorithmic}\caption{$\ell_0$ regression \cite{apy09}}\label{alg:l0_regression}
\Procedure{\textsc{L0Regression}}{$A,b,n,k,c$} \Comment{Theorem~\ref{thm:l0_regression}}
	\State $x' \leftarrow 0^k$
	\State $\{S_1,S_2, \cdots, S_h\} \leftarrow \textsc{GenerateRegularPartition}(A,n,k)$
	\State $x' \leftarrow 0^k$
	\For{$i=1 \to h$}
		\State Find a $\wt{x} $ such that $A^{S_i} \wt{x} = b_{S_i}$
 		\If{$\| A\wt{x} - b \|_0 < \| A x' - b\|_0 $}
			\State $x' \leftarrow \wt{x}$
		\EndIf
	\EndFor
	\State \Return $x'$
\EndProcedure
\end{algorithmic}
\end{algorithm}

\cite{apy09} studied the Nearest Codeword problem over finite fields $\mathbb{F}_2$. Their proof can be extended to the real field and generalized to Theorem~\ref{thm:l0_regression}. For completeness, we still provide the proof of the following result.
\begin{theorem}[Generalization of \cite{apy09}]\label{thm:l0_regression}
Given matrix $A\in \R^{n \times k}$ and vector $\R^n$, for any $c\in [1,k]$, there is an algorithm (Algorithm~\ref{alg:l0_regression}) that runs in $n^{O(1)}$ time and outputs a vector $x'\in \R^k$ such that
\begin{align*}
\| A x' - b \|_0 \leq k  \min_{x \in \R^k } \| A x - b \|_0.
\end{align*}
\end{theorem}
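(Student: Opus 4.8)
The plan is to show that one of the $h$ candidate vectors examined in the loop of Algorithm~\ref{alg:l0_regression} already has $\ell_0$-error at most $k\cdot\OPT$, where $\OPT=\min_{x\in\R^k}\|Ax-b\|_0$; since the algorithm returns the best candidate, this suffices. Fix an optimal $x^*$, let $T=\{r\in[n]:A^rx^*=b_r\}$ be its agreement set, and set $B=[n]\setminus T$, so $|B|=\OPT$. First I would record two facts about any regular partition $\{S_1,\dots,S_h\}$ (Definition~\ref{def:l0_regular_partition}), both immediate from the definition: (i) the rowspan condition forces $|S_i|=\rank(A^{S_i})=\rank(A^{\cup_{j\ge i}S_j})\le\rank(A)\le k$ for every $i$; and (ii) since $A^{S_i}$ has full row rank, the system $A^{S_i}\tilde x=b_{S_i}$ solved inside the loop is always consistent and solvable in polynomial time. (The subroutine \textsc{GenerateRegularPartition} can be implemented greedily by repeatedly extracting a maximal linearly independent set of rows from the as-yet-unused rows, which runs in $n^{O(1)}$ time and clearly yields a valid regular partition.)

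Call an index $i$ \emph{good} if $S_i\subseteq T$. The argument splits into two cases. If no index is good, then every $S_i$ meets $B$; as the $S_i$ are disjoint this gives $h\le|B|=\OPT$, and together with $n=\sum_i|S_i|\le hk$ (by (i)) we get $n\le k\cdot\OPT$, so even the trivial bound $\|Ax'-b\|_0\le n$ finishes this case. Otherwise let $i^*$ be the \emph{smallest} good index. The key step is that the vector $\tilde x$ produced at iteration $i^*$ agrees with $x^*$ not only on the rows of $S_{i^*}$ but on all of $\bigcup_{j\ge i^*}S_j$: indeed $A^{S_{i^*}}\tilde x=b_{S_{i^*}}=A^{S_{i^*}}x^*$ (using $S_{i^*}\subseteq T$), and because $\mathrm{rowspan}(A^{S_{i^*}})=\mathrm{rowspan}(A^{\cup_{j\ge i^*}S_j})$ every row $A^r$ with $r\in\bigcup_{j\ge i^*}S_j$ is a linear combination of rows of $A^{S_{i^*}}$, whence $A^r\tilde x=A^rx^*$. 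Consequently $A\tilde x$ can disagree with $b$ only on the set $\big(\bigcup_{j<i^*}S_j\big)\cup B$.

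It then remains to bound this set. By (i), $\big|\bigcup_{j<i^*}S_j\big|\le(i^*-1)k$. Also, each of $S_1,\dots,S_{i^*-1}$ is non-good hence meets $B$, so on one hand $i^*-1\le|B|=\OPT$, and on the other hand $\bigcup_{j<i^*}S_j$ contains at least $i^*-1$ elements of $B$. Combining, $\big|\big(\bigcup_{j<i^*}S_j\big)\cup B\big|\le(i^*-1)k+\OPT-(i^*-1)=(i^*-1)(k-1)+\OPT\le k\cdot\OPT$. Hence $\|A\tilde x-b\|_0\le k\cdot\OPT$, and so $\|Ax'-b\|_0\le k\cdot\OPT$. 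The running time is $n^{O(1)}$ since computing the regular partition is polynomial and the loop performs $h\le n$ iterations, each solving one full-row-rank linear system; the parameter $c$ is not used by the bound.

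The step I expect to require the most care is verifying that the exact fit on $S_{i^*}$ propagates to all \emph{later} blocks $S_j$ $(j\ge i^*)$, which is exactly where the defining property of a \emph{regular} partition — rather than an arbitrary partition into independent-row blocks — is used; getting the indexing right so that only "earlier" blocks can be fit incorrectly, and pairing this with the $\le\OPT$ bound on the number of non-good blocks, is the heart of the estimate. Degenerate situations (zero rows of $A$, $\OPT\in\{0,n\}$, empty blocks) only make the bound easier and can be disposed of quickly.
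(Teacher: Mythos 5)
Your proof is correct and follows essentially the same route as the paper's: build a regular partition, take the first block $S_{i^*}$ contained in the agreement set of an optimal $x^*$, use the rowspan condition to propagate $A\tilde x = Ax^*$ to all later blocks, and charge at most $k$ errors to each earlier (non-good) block, which must contain an error of $x^*$. Your explicit treatment of the case where no block avoids the error set (giving $n \le k\cdot\OPT$ via the trivial bound) fills a degenerate case the paper's write-up glosses over, but the argument is otherwise the same.
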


\begin{proof}
Let $x^* \in \R^k$ denote the optimal solution to $\min_{x\in \R^k} \| A x - b \|_0$. We define set $E$ as follows
\begin{align*}
E = \{ i \in [n] ~ | ~ (Ax^*)_i \neq b_i \}.
\end{align*}
We create a regular partition $\{S_1,S_2,\cdots, S_h\}$ for $[n]$ with respect to $A$.

Let $i$ denote the smallest index such that $|S_i \cap E| = 0$, i.e.,
\begin{align*}
i = \min \{ j ~|~ |S_j \cap E| = 0 \}.
\end{align*}
The linear equation we want to solve is $A^{S_i } x = b_{S_i}$. Let $\wt{x} \in \R^k$ denote a solution to $A^{S_i} \wt{x} = A^{S_i} x^*$ (Note that, by our choice of $i$, $b_{S_i} = A^{S_i}x^*$). Then we can rewrite $\| A \wt{x} -b \|_0$ in the following sense,
\begin{align}\label{eq:l0_rewrite_Awtx_minus_b}
\| A \wt{x} - b \|_0 = \sum_{j=1}^{i-1} \left\| A^{S_j} \wt{x} - b_{S_j} \right\|_0 + \sum_{j=i}^h \left\| A^{S_j} \wt{x} - b_{S_j} \right\|_0.
\end{align}

For each $j \in \{1,2,\cdots, i-1\}$, we have
\begin{align}\label{eq:l0_rewrite_Awtx_minus_b_part1}
\| A^{S_j} \wt{x} - b_{S_j} \|_0 \leq & ~ k \notag \\
\leq & ~ \| A^{S_j} x^* - b_{S_j} \|_0 \cdot \lceil k  \rceil,
\end{align}
where the first step follows from $|S_0|\leq k$, and the last step follows from $\| A^{S_j} x^* - b_{S_j} \|_0 \geq 1$, $\forall j \in [i-1]$.

Note that, by our choice of $i$, we have $A^{S_i} \wt{x} = A^{S_i} x^*$. Then for each $j \in \{i,i+1, \cdots, n\}$, using the regular partition property, there always exists a matrix $P_{(j)}$ such that $A^{S_j} = P_{(j)} A^{S_i}$. Then we have
\begin{align}\label{eq:l0_rewrite_Awtx_minus_b_part2}
A^{S_j} \wt{x} = P_{(j)} A^{S_i} \wt{x} = P_{(j)} A^{S_i} x^* = A^{S_j} x^*.
\end{align}

Plugging Eq.~\eqref{eq:l0_rewrite_Awtx_minus_b_part1} and \eqref{eq:l0_rewrite_Awtx_minus_b_part2} into Eq.~\eqref{eq:l0_rewrite_Awtx_minus_b}, we have
\begin{align*}
\| A \wt{x} - b \|_0 = & ~ \sum_{j=1}^{i-1} \left\| A^{S_j} \wt{x} - b_{S_j} \right\|_0 + \sum_{j=i}^h \left\| A^{S_j} \wt{x} - b_{S_j} \right\|_0 \\
\leq & ~ k \sum_{j=1}^{i-1} \left\| A^{S_j} x^* - b_{S_j} \right\|_0 + \sum_{j=i}^h \left\| A^{S_j} x^* - b_{S_j} \right\|_0 \\
\leq & ~ k  \| A x^* - b \|_0.
\end{align*}
This completes the proof.
\end{proof}

\section{Hardness}\label{sec:hardinstance}

\subsection{Column Subset Selection for the Huber Function}

The rough idea here is to define $k = \Omega(\sqrt{\log n})$ groups of columns, where
we carefully choose the $i$-th group to have $n^{1-2i\epsilon}$ columns, $\eps = .2/(1.5k)$, and
in the $i$-th group each column has the form
$$n^{1.5i\epsilon} \cdot 1^n + [\pm n^{-.2+i\epsilon}, \ldots, \pm n^{-.2+i \epsilon}, \pm n^{.5+2i\epsilon}, \ldots, \pm n^{.5+2i\epsilon}],$$
where there are $n-n^{.1}$ coordinates where the perturbation is randomly either $+n^{-.2+i\epsilon}$ or $-n^{-.2+i\epsilon}$,
and the remaining $n^{.1}$ coordinates are randomly either $+n^{.5+2i\epsilon}$ or $-n^{.5+2i\epsilon}$. We call the former
type of coordinates ``small noise'', and the latter ``large noise''.
All remaining columns
in the matrix are set to $0$.
Because of the random signs,
it is very hard to fit the noise in one column to that of another column.
One can show, that to approximate a column in the $j$-th group by a column in the $i$-th group, $i < j$, one needs to scale
by roughly $n^{1.5(j-i)\epsilon}$, just to cancel out the ``mean'' $n^{1.5j\epsilon} \cdot 1^n$. But when doing so, since the Huber
function is quadratic for small values, the scaled small noise is now magnified more than linearly compared to what it was
before, and this causes a column in the $i$-th group not to be a good approximation of a column in the $j$-th group. On the
other hand, if you want to approximate a column in the $j$-th group by a column in the $i$-th group, $i > j$, one again
needs to scale by roughly $n^{1.5(j-i)\epsilon}$ just to cancel out the ``mean'', but now one can show
the large noise from the column in the $i$-th group is too large and remains in the linear regime,
causing a poor approximation. The details of this construction are given in the following theorem.

\begin{theorem}
Let $H(x)$ denote the modified Huber function with $\tau =1$, i.e.,
\begin{align*}
H (x) =
\begin{cases}
x^2 /\tau, & \text{~if~} |x| < \tau;\\
|x|, & \text{~if~} |x| \geq \tau.
\end{cases}
\end{align*}
For any $n\geq 1,$ 
there is a matrix $A\in \R^{n \times n}$ such that, if we select 
$o(\sqrt{\log n})$ columns to fit the entire matrix, there is 
no $O(1)$-approximation, i.e., for any subset $S\subseteq[n]$ with $|S|=o(\sqrt{\log n}),$
\begin{align*}
\min_{X \in \R^{|S| \times n} } \| A_S X -A \|_H \geq 
\omega(1)\cdot\min_{\rank-1~A'} \| A' - A \|_H.
\end{align*}
\end{theorem}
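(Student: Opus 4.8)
The plan is to instantiate the randomized construction sketched above and analyze it in three steps: a cheap explicit rank-$1$ solution, a pigeonhole step isolating a block of columns that no selected column can ``see,'' and an anti-concentration argument showing that fitting that block from the selected columns costs a $2^{\Theta(\sqrt{\log n})}$ factor more than $\OPT$. First I would fix $k = \Theta(\sqrt{\log n})$ and $\eps = 0.2/(1.5k) = \Theta(1/\sqrt{\log n})$, so that for all $i \le k$ every ``small noise'' entry $n^{-0.2+i\eps}$ lies in the quadratic regime $H(x)=x^2$ (as $i\eps < 0.2$) while every ``large noise'' entry $n^{0.5+2i\eps} \ge 1$ lies in the linear regime $H(x)=|x|$. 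Group $i$ gets $n^{1-2i\eps}$ columns, so $\sum_i n^{1-2i\eps} = O(n)$ and the matrix fits after a constant rescaling of $n$; the other columns are zero. The value of $k$ is chosen precisely so that the eventual blow-up factor $n^{\Theta(\eps)} = 2^{\Theta(\sqrt{\log n})}$ is super-constant --- with $k = \Theta(\log n)$ it would only be $\Theta(1)$.

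For the upper bound on $\OPT$, I would use the rank-$1$ matrix $A' = \mathbf{1}^n w^\top$ with $w_c = n^{1.5 i\eps}$ for $c$ in group $i$ (and $w_c = 0$ otherwise), which erases all the means. Then $\|A-A'\|_H = \sum_c \|\nu_c\|_H$, where a group-$i$ column's noise $\nu_c$ costs $\approx n(n^{-0.2+i\eps})^2 + n^{0.1}\,n^{0.5+2i\eps} = \Theta(n^{0.6+2i\eps})$; so group $i$ costs $\Theta(n^{1.6})$ and $\OPT \le O(k\,n^{1.6})$.

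The heart of the argument is the lower bound. Fix $S$ with $|S| = o(\sqrt{\log n})$. The $\le |S|$ groups that $S$ meets cut $[k]$ into $\le |S|$ runs of consecutive untouched indices, one of length $\ge k/|S|$; let $j^*$ be its midpoint, so every group met by $S$ is at group-distance $\ge \Delta := \Omega(k/|S|) = \omega(1)$ from $j^*$. It then suffices to lower bound $\sum_{c \text{ in group } j^*}\min_{\beta}\|A_S\beta - A_c\|_H$. For a fixed such $c$ and $\beta$, let $\mu(\beta) = \sum_{c'\in S}\beta_{c'} n^{1.5 i_{c'}\eps}$ be the fit's mean component. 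If $|\mu(\beta) - n^{1.5 j^*\eps}| \ge n^{-0.2+j^*\eps}$, the common coordinate offset alone forces cost $\ge n^{\Omega(1)}\cdot n^{0.6+2j^*\eps}$. Otherwise the mean is matched to within the column's own noise scale, so the columns of $S$ at distance $\ge\Delta$ \emph{below} $j^*$ (resp.\ \emph{above}) must jointly supply most of $n^{1.5j^*\eps}$ in $\mu(\beta)$: in the ``below'' case this forces enough coefficient $\ell_2$-mass to scale their small noise up to magnitude $\gtrsim n^{-0.2+j^*\eps+\Omega(\Delta\eps)}$, which, squared by $H$ over the $\Omega(n)$ coordinates that are small-noise for all of $S\cup\{c\}$, costs $\ge n^{0.6+2j^*\eps+\Omega(\Delta\eps)}$; in the ``above'' case the (necessarily small) coefficients still leave the large noise of those columns in the linear regime at magnitude $\gtrsim n^{0.5+2j^*\eps+\Omega(\Delta\eps)}$, costing $\ge n^{0.6+2j^*\eps+\Omega(\Delta\eps)}$ over their $n^{0.1}$ large-noise coordinates. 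The independent random signs are what make these bounds hold coordinate-wise: a Khintchine/Paley--Zygmund estimate shows the residual cannot shrink below the combined noise magnitude of the fit on a constant fraction of the relevant coordinates, so cancellations among the $o(\sqrt{\log n})$ selected columns (or against $A_c$ itself) cannot help. Summing over the $n^{1-2j^*\eps}$ columns of group $j^*$ yields fitting cost $\ge n^{1.6}$ times a factor that is $n^{\Omega(\Delta\eps)} = n^{\Omega(1/|S|)}$ in the mean-matched case and $n^{\Omega(1)}$ in the mean-offset case, each of which is $\omega(k)$; hence $\min_X\|A_SX - A\|_H / \OPT = \omega(1)$ since $\OPT = O(k\,n^{1.6})$ and $|S| = o(\sqrt{\log n})$.

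For the probabilistic bookkeeping, each coordinate-wise bound fails for a fixed $(S,\beta)$ with probability $2^{-\Omega(n^{0.1})}$ (the bottleneck being the $n^{0.1}$ large-noise coordinates); discretizing $\beta$ to a $1/\poly(n)$-grid of a $\poly(n)$ box (anything outside is trivially bad) and union bounding over the $n^{O(|S|)} = 2^{o((\log n)^{3/2})}$ pairs $(S,\text{net point})$ leaves total failure probability below $1$, so a good $A$ exists. The main obstacle is this anti-concentration core: ruling out \emph{all} coefficient vectors $\beta$ simultaneously --- including ones with heavy cancellation among columns from groups on both sides of $j^*$ --- with a failure probability small enough to survive the union over subsets and a net of $\beta$'s. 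Convexity of $\beta \mapsto \|A_S\beta - A_c\|_H$ lets one restrict to a bounded region and replace $H$ by piecewise quadratic/linear surrogates on the ``all-small'' coordinate block and the large-noise positions, but the group-by-group case split and the Khintchine estimates (with sign randomness shared across many columns) are where the real work lies.
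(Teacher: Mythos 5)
Your construction, parameters, $\OPT$ upper bound, and case analysis (mean must be essentially matched; a contributor from a lower group forces rescaled small noise into the quadratic regime; a contributor from a higher group leaves rescaled large noise in the linear regime) are essentially identical to the paper's proof; the only cosmetic difference is that you lower bound a single well-isolated group $j^*$ using the distance gain $n^{\Omega(\Delta\eps)}$, while the paper sums a gain of $n^{\Omega(\eps)}$ per column over all $\ge k/2$ untouched groups --- both are quantitatively fine. The genuine gap is exactly the step you flag as ``where the real work lies'': making the anti-concentration hold \emph{for all} coefficient vectors simultaneously. The paper's key idea, which your plan is missing, is that no net is needed at all: the no-cancellation event depends only on the \emph{sign pattern} of the coefficients. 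Its Claim (``noise cannot be used to fit other vectors'') shows that with probability $1-2^s\cdot 2^{-\Theta(m/2^s)}$, for every choice of $\alpha_1,\dots,\alpha_s$ there are $\Omega(m/2^s)$ coordinates on which $\alpha_1\Delta_{j_1},\dots,\alpha_s\Delta_{j_s}$, $-\Delta_j$ and the constant mean residual all share a sign; on those coordinates magnitudes add, so the residual is at least the single largest term, deterministically in $\alpha$, and the union bound is only over the $2^s$ sign patterns. This gives the uniform statement in one stroke, whereas your net-plus-Khintchine route still has an unproved ingredient: the assertion that every $\beta$ outside a $\poly(n)$ box is ``trivially bad'' is not immediate, since huge coefficients can nearly cancel among themselves; to reduce to the box you would need, e.g., a high-probability lower bound on the least singular value of $A_S$ (itself a probabilistic claim needing its own union bound over subsets), or an argument through $\|A_S\beta\|_\infty$. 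The counting in your net argument ($n^{O(|S|)}$ net points against $2^{-\Omega(n^{0.1})}$ per-point failure) does work out, so the route is likely salvageable, but as written the core uniformity step is deferred rather than proved.

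A smaller quantitative slip: with your mean-offset threshold $|\mu(\beta)-n^{1.5j^*\eps}|\ge n^{-0.2+j^*\eps}$, the offset case only forces cost on the order of $n\cdot(n^{-0.2+j^*\eps})^2=n^{0.6+2j^*\eps}$ at the borderline, i.e.\ a constant times the per-column optimal share, not the claimed $n^{\Omega(1)}$ factor. This is easily repaired by raising the threshold (say to $n^{-0.2+j^*\eps+\Delta\eps/4}$); the complementary ``nearly matched'' case is unaffected since the allowed slack is still far below $n^{1.5j^*\eps}$. The paper avoids this issue by only requiring the mean residual to be $O(n^{1.5i\eps})$ per entry, with the linear regime of $H$ already giving $\omega(n^{0.6+2i\eps})$ on the sign-agreement coordinates otherwise.
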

\begin{proof}
Suppose there is an algorithm which only finds a subset with size $k/2=o(\sqrt{\log n}).$ We want to prove a lower bound on its approximation ratio.

Let $\varepsilon=0.2/(1.5k).$
Let $A$ denote a matrix with $k+1$ groups of columns.

For each group $i\in [k]$, $I_i$ has $n^{1-2i\epsilon}$ columns which are
\begin{align*}
\begin{bmatrix}
n^{1.5i\epsilon} \\
n^{1.5i\epsilon} \\
n^{1.5i\epsilon}\\
\vdots \\
n^{1.5i\epsilon}\\
n^{1.5i\epsilon}\\
n^{1.5i\epsilon} \\
n^{1.5i\epsilon}
\end{bmatrix}
+
\begin{bmatrix}
\pm n^{-0.2+ i \epsilon} \\
\pm n^{-0.2+ i \epsilon} \\
\vdots \\
\pm n^{-0.2+ i \epsilon} \\
\pm n^{0.5 + 2 i \epsilon}\\
\pm n^{0.5 + 2 i \epsilon}\\
\vdots\\
\pm n^{0.5 + 2 i \epsilon}
\end{bmatrix}
\in \R^n,
\end{align*}
where $\pm$ indicates i.i.d. random signs. For the error column, the first $n-n^{0.1}$ rows are $n^{-0.2+ i \epsilon}$, and the last $n^{0.1}$ rows are $n^{0.5 + 2 i \epsilon}.$

The last group of $n-\sum_{i=1}^k n^{1-2i\epsilon}$ columns are 
\begin{align*}
\begin{bmatrix}
0 \\
0 \\
\vdots \\
0 \\
0
\end{bmatrix}
\in \R^n.
\end{align*}
The optimal cost is at most
\begin{align*}
\sum_{i=1}^k n^{1-2 i \epsilon} \cdot ( (n-n^{0.1}) H( n^{-0.2+i\epsilon} ) + n^{0.1}H(n^{0.5+2i\epsilon}) )
\leq \sum_{i=1}^k n^{1-2 i \epsilon} ( n \cdot n^{-0.4+2i\epsilon} + n^{0.6+2i\epsilon} )
\leq O( k n^{1.6}).
\end{align*}
where the second step follows since $n^{-0.2+i\epsilon} < 1$ and $n^{0.5+2i\epsilon} \geq 1$. Thus, it implies
\begin{align*}
\min_{\rank-1~A'} \|A' - A \|_H\leq O (k n^{1.6}).
\end{align*}

Now let us consider the lower bound for using a subset of columns to fit the matrix. 
First, we fix a set $S=\{j_1,j_2,\cdots,j_{k/2}\}$ of $k/2$ columns.
Since there are $k$ groups, and $|S|\leq k/2,$ the number of groups $I_i$ for $i\in[k]$ with $S\cap I_i=\emptyset$ is at least $k/2.$
It means that there are at least $k/2$ groups for which $S$ does not have any column from them.
Notice that the optimal cost is at most $O(kn^{1.6})$, so it suffices to prove that $\forall i\in[k]$ with $I_i\cap S=\emptyset,$ each column $j\in I_i$ will contribute a cost of $\omega(n^{0.6+2i\varepsilon}).$

For notation, we use $\group(j)$ to denote the index of the group which contains the column $j$.
For each column $j$, we use $\Delta_j$ to denote the noise part, and use $A^*_j$ to denote the rank-$1$ ``ground truth'' part.
Notice that $A_j=A^*_j+\Delta_j.$

\begin{claim}[Noise cannot be used to fit other vectors]\label{cla:noise_cannot_fit_noise}
Let $x_1,x_2,\cdots,x_s\in\mathbb{R}^{m}$ be $s$ random sign vectors.
Then with probability at least $1-2^s\cdot 2^{-\Theta(m/2^s)},$ $\forall \alpha_1,\alpha_2,\cdots,\alpha_s\in \mathbb{R},$ the size of $Z_{\alpha_1,\alpha_2,\cdots,\alpha_s}=\{i\in [m]\mid \sign((\alpha_1x_1)_i)=\sign((\alpha_2x_2)_i)=\cdots=\sign((\alpha_sx_s)_i)=\sign(+1)\}$ is at least $\Omega(m/2^s).$
\end{claim}
\begin{proof}
For a set of fixed $\alpha_1,\alpha_2,\cdots,\alpha_s,$ the claim follows from the Chernoff bound.
Since there are $2^s$ different possibilities of signs of $\alpha_1,\cdots,\alpha_s,$ taking 
a union bound over them completes the proof.
\end{proof}

Now we consider a specific column $j\in I_i$ for some $i\in[k],$ where $I_i\cap S=\emptyset.$
Suppose the fitting coefficients are $\alpha_1,\alpha_2,\cdots,\alpha_{k/2}.$
Consider the following term
\begin{align*}
&(\alpha_1A_{j_1}+\alpha_2A_{j_2}+\cdots+\alpha_{k/2}A_{k/2})-A_j\\
=~&(\alpha_1A^*_{j_1}+\alpha_2A^*_{j_2}+\cdots+\alpha_{k/2}A^*_{k/2}-A^*_j)+(\alpha_1\Delta_{j_1}+\alpha_2\Delta_{j_2}+\cdots+\alpha_{k/2}\Delta_{k/2}-\Delta_j)
\end{align*}
Let $u^*=(\alpha_1A^*_{j_1}+\alpha_2A^*_{j_2}+\cdots+\alpha_{k/2}A^*_{k/2}-A^*_j).$
By Claim~\ref{cla:noise_cannot_fit_noise}, with probability at least $1-(k/2)\cdot 2^{-\Theta(n/2^{k/2})},$
\begin{align*}
|\{t\in[n]\mid \sign(u^*_t)=\sign(\alpha_1\Delta_{j_1,t})=\cdots=\sign(\alpha_{k/2}\Delta_{j_{k/2},t})=\sign(-\Delta_{j,t})\}|=\Omega(n/2^{k/2}).
\end{align*}
Observe that all the coordinates of $u^*$ are the same, 
and the absolute value of each entry of $u^*$ should be at most $O(n^{1.5i\varepsilon}).$
Otherwise the column already has $\omega(n/2^{k/2}\cdot n^{1.5i\varepsilon})=\omega(n^{0.6+2i\varepsilon})$ cost.
Thus, the magnitude of each entry of $\alpha_1A^*_{j_1}+\alpha_2A^*_{j_2}+\cdots+\alpha_{k/2}A^*_{k/2}$ is $\Theta(n^{1.5i\varepsilon}).$
Thus, there exists $t\in[k/2],$ such that the absolute value of each entry of $\alpha_t A^*_{j_t}$ is at least $\Omega(n^{1.5i\varepsilon}/k).$
Then there are two cases.

The first case is $\group(t)<\group(j).$ Let $\group(t)=i'.$ Then $|\alpha_t|=\Omega(n^{1.5(i-i')\epsilon}/k).$
By Claim~\ref{cla:noise_cannot_fit_noise} again, with probability at least $1-(k/2)\cdot 2^{-\Theta(n/2^{k/2})},$ the size of
\begin{align*}
\{z\in[n-n^{0.1}]\mid \sign(u^*_z)=\sign(\alpha_1\Delta_{j_1,z})=\cdots=\sign(\alpha_{k/2}\Delta_{j_{k/2},z})=\sign(-\Delta_{j,z})\}
\end{align*}
is at least $\Omega(n/2^{k/2}).$ Thus, the cost to fit is at least $\Omega(n/2^{k/2})\cdot (n^{-0.2+ i' \epsilon}n^{1.5(i-i')\epsilon}/k)^2=\omega(n^{0.6+2i\varepsilon}).$

The second case is $\group(t)>\group(j).$ Let $\group(t)=i'.$  Then $|\alpha_t|=\Omega(n^{1.5(i-i')\epsilon}/k).$
By Claim~\ref{cla:noise_cannot_fit_noise} again, with probability at least $1-(k/2)\cdot 2^{-\Theta(n^{0.1}/2^{k/2})},$ the size of
\begin{align*}
\{z\in\{n-n^{0.1}+1,\cdots,n\}\mid \sign(u^*_z)=\sign(\alpha_1\Delta_{j_1,z})=\cdots=\sign(\alpha_{k/2}\Delta_{j_{k/2},z})=\sign(-\Delta_{j,z})\}
\end{align*}
is at least $\Omega(n^{0.1}/2^{k/2}).$ Thus, the fitting cost is at least $\Omega(n^{0.1}/2^{k/2})\cdot (n^{0.5+ 2i' \epsilon}n^{1.5(i-i')\epsilon}/k)=\omega(n^{0.6+2i\varepsilon}).$

By taking a union bound over all columns $j$, we have with probability at least $1-2^{n^{\Theta(1)}},$ the total cost to fit by a column subset $S$ is at least $\omega(kn^{1.6}).$

Then, by taking a union bound over all the $n\choose k$ number of sets $S,$ we complete the proof.
\end{proof}

%%%%%%%%%%%%%%%%%%%%%%%%%%%%%%%%%%%%%%%%%%%%%%%%%%%%%%%%%%%%%%%%%%%%%%%%%%%%%%%%%%%%%%%%%%%%%%%%%%%%%%%%%%%%%%%%%%%%%%%%%%%%%%%%%%%%%%%%%%%%%%%%%%%%%%%%%%%%%%%%%%%%%%%%%%%%%%%%%%%%%%%%%%%%%%%%%%%%%%%%%%%%

\subsection{Column Subset Selection for the Reverse Huber Function}\label{sec:reverse_huber}
\begin{figure}[t]
\begin{center}
   \includegraphics[width=1.0\textwidth]{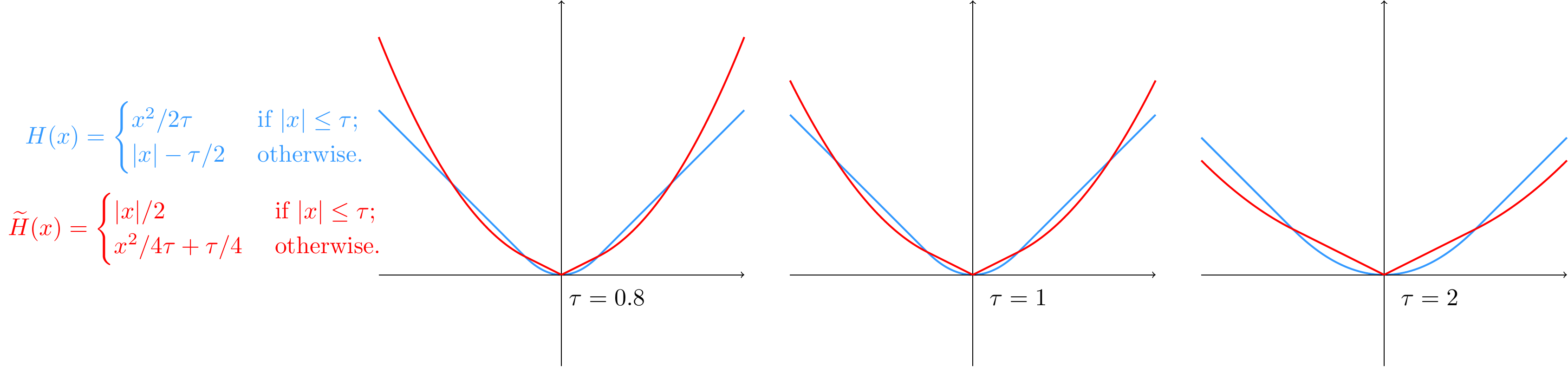}
   \caption{The blue curve is the Huber function which combines an $\ell_2$-like measure for small $x$ with an $\ell_1$-like measure for large $x$. The red curve is the ``reverse'' Huber function which combines an $\ell_1$-like measure for small $x$ with an $\ell_2$-like measure for large $x$.}
\end{center}
\end{figure}

In this section, we consider a ``reverse Huber function'': $g(x) = x^2$ if $x \geq 1$ and $g(x) = |x|$ for $x \leq 1$.

\begin{theorem}
Let $g(x)$ denote the ``reverse Huber function'' with $\tau =1$, i.e.,
\begin{align*}
H (x) =
\begin{cases}
x^2 /\tau, & \text{~if~} |x| > \tau;\\
|x|, & \text{~if~} |x| \leq \tau.
\end{cases}
\end{align*}
For any $n\geq 1,$ %and
%$\epsilon \in (0,1/10k)$,
there is a matrix $A\in \R^{n \times n}$ such that, if we select 
only $1$ column to fit the entire matrix, there is 
no $n^{o(1)}$-approximation to the best rank-$1$ approximation,
 i.e., for any subset $S\subseteq[n]$ with $|S|=1,$
\begin{align*}
\min_{X \in \R^{|S| \times n} } \| A_S X -A \|_g \geq 
n^{\Omega(1)}\cdot\min_{\rank-1~A'} \| A' - A \|_g.
\end{align*}
\end{theorem}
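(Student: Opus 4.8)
The plan is to exhibit a matrix $A$ that is close to rank $1$ (so $\OPT$ is small) yet for which every single column, used as the sole basis, is forced — by the lack of scale‑invariance of the reverse Huber function — to incur an $n^{\Omega(1)}$ blowup on a constant fraction of the columns. The construction mirrors the Huber hardness instance of Section~\ref{sec:hardinstance}, but with the two regimes of $g$ swapped. Concretely, I would partition the columns into a small number of groups; the columns in group $i$ all share the same rank‑$1$ ``mean'' part $\mu_i\,\mathbf 1_n$, with $\mu_i$ geometrically increasing in $i$, plus an independent random‑sign perturbation $\Delta_j$ that has magnitude below the threshold $1$ (the $\ell_1$‑like regime of $g$) on a $1-o(1)$ fraction of the coordinates and magnitude above $1$ (the $\ell_2$‑like regime) on a small set of coordinates. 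The per‑group perturbation magnitudes, the scales $\mu_i$, and the group sizes are tuned so that the ``small‑noise'' magnitude grows slower than $\mu_i$ while the ``large‑noise'' magnitude grows faster than $\mu_i^2$; all remaining columns are set to $0$ (padding with zero rows if needed). The upper bound $\OPT \le \|A-A^*\|_g$, where $A^* = \mathbf 1_n w^\top$ with $w_j=\mu_{i(j)}$, then equals the total perturbation cost $\sum_j\|\Delta_j\|_g$, which by design is a fixed polynomial in $n$.

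For the lower bound, fix any column $s$, lying in some group $i_0$. Since $A_SX$ with $|S|=1$ approximates column $j$ only by a scalar multiple $\alpha A_s$, to fit a column $j$ in a different group one must essentially take $\alpha \approx \mu_{i(j)}/\mu_{i_0}$ to cancel the rank‑$1$ part (a short case analysis over $\alpha$ shows any other choice pays a comparably large ``mean residual'' on all $n$ coordinates). If $i(j)<i_0$ this is a scale‑down, which shrinks the large‑noise coordinates of $A_s$ only quadratically; since that noise was chosen to grow faster than $\mu_i^2$, the scaled‑down version still dominates the target column's own large noise, causing a polynomial blowup there. If $i(j)>i_0$ this is a scale‑up, which magnifies the small‑noise coordinates of $A_s$; once they cross the threshold $1$ they enter the $\ell_2$‑like regime and their cost grows quadratically in the scale, again beating the target column's own (smaller) noise by a polynomial factor. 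Thus for every $s$ there are at least two groups — a constant fraction of all columns — that cannot be fit within an $n^{\Omega(1)}$ factor of what $\OPT$ pays for them, and summing these contributions gives $\min_X\|A_SX-A\|_g\ge n^{\Omega(1)}\OPT$.

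The remaining ingredient is to control the random signs, exactly as in Claim~\ref{cla:noise_cannot_fit_noise}: a Chernoff bound shows that for any fixed scalars the scaled basis perturbation $\alpha\Delta_s$ and a target perturbation $\Delta_j$ agree in sign on a constant fraction of the relevant coordinates, so that they add rather than cancel and $|(\alpha\Delta_s-\Delta_j)_t|\gtrsim\max(|\alpha\Delta_{s,t}|,|\Delta_{j,t}|)$ there; a union bound over the $O(1)$ possible joint sign patterns of the few relevant coefficients, and then over the $n$ choices of $s$, makes this hold simultaneously with probability $1-2^{-n^{\Omega(1)}}$ provided the ``large‑noise'' coordinate set has size $n^{\Omega(1)}$.

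The main obstacle is the parameter tuning. The function $g$ is ``at least linear and at most quadratic'' under scaling in \emph{both} directions, and a naive geometric construction always leaves one extreme group (the lowest‑ or highest‑scale one) that can cheaply fit everything else by scaling in the safe direction. Ruling this out requires simultaneously: (i) keeping $\OPT$ a small polynomial — in particular not letting any single group's perturbation cost dominate, which forces a careful, essentially balanced choice of group sizes against the fast‑growing large‑noise magnitudes; (ii) making the scale‑up blowup on the small‑noise coordinates and the scale‑down blowup on the large‑noise coordinates \emph{both} exceed $\OPT$ by a polynomial factor regardless of which group $s$ comes from; and (iii) verifying that for no choice of $\alpha$ — not just the mean‑cancelling one — does the algorithm escape both blowups. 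Finding a single clean setting of the scales $\mu_i$, the two perturbation magnitudes, the number and sizes of groups, and the size of the large‑noise coordinate set for which all of (i)--(iii) hold is the crux of the argument.
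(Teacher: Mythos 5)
There is a genuine gap: what you have written is a strategy outline, not a proof, and by your own admission the decisive step is missing. The entire argument hinges on exhibiting concrete scales $\mu_i$, noise magnitudes, group sizes, and the size of the large-noise support for which (i) $\OPT$ stays polynomially small, (ii) both the scale-up and scale-down blowups exceed $\OPT$ by $n^{\Omega(1)}$ for \emph{every} choice of the basis column (including the extreme groups, which you correctly identify as the problematic ones), and (iii) no intermediate choice of the fitting scalar $\alpha$ escapes both penalties. You state explicitly that finding such a setting ``is the crux of the argument'' and leave it unresolved; since the reverse Huber function is squeezed between linear and quadratic behavior under scaling in both directions, it is not at all evident that the geometric-groups-plus-random-signs template (borrowed from the Huber lower bound in Appendix~\ref{sec:hardinstance}) can be tuned to satisfy all three constraints simultaneously. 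Without that verification the claimed bound $\min_X\|A_SX-A\|_g\ge n^{\Omega(1)}\cdot\OPT$ is not established.

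The paper's proof is also structurally much simpler, and it is worth seeing why no randomness or grouping is needed when $|S|=1$. It takes one ``spike'' column $a=(n^{1/2},0,\ldots,0)^\top$ and $n-1$ identical dense columns $b=(0,1/n,\ldots,1/n)^\top$. Because $a$ and $b$ have disjoint supports, a single selected column can never touch the other type's support: selecting $a$ leaves cost at least $(n-1)^2\cdot g(1/n)=\Theta(n)$ on the dense columns, and selecting $b$ leaves cost at least $g(n^{1/2})=n$ on the spike. Meanwhile the rank-$1$ matrix generated by the interpolating vector $c=(1/n^{1/4},1/n,\ldots,1/n)^\top$ — which is close to no column of $A$, and exists only because $g$ is not scale-invariant — fits $a$ with cost $\Theta(n^{3/4})$ and all copies of $b$ with total cost $\Theta(n^{3/4})$, giving an $n^{1/4}$ gap deterministically. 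If you want to salvage your approach, note that support disjointness already gives the single-column lower bound for free, so the sign-concentration machinery of Claim~\ref{cla:noise_cannot_fit_noise} and the multi-group tuning are unnecessary for this statement; they would only become relevant if you were proving a lower bound against subsets of more than one column.
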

\begin{proof}
Let $A \in \mathbb{R}^{n \times n}$ have one column that is $a = (n^{1/2}, 0, \ldots, 0)^\top$ and
$n-1$ columns that are each equal to $b = (0, 1/n, 1/n, \ldots, 1/n)^\top$. 
If we choose one column which has the form as $a$ to fit the other columns, the cost is at least $(n-1)^2/n=\Theta(n).$
If we choose one column which has the form as $b$ to fit the other columns, the cost is at least $(n^{1/2})^2=\Theta(n).$

Now we consider using a vector $c = (1/n^{1/4}, 1/n, 1/n, \ldots, 1/n)^\top$ to fit all the columns.
One can use $c$ to approximate
$a$ with cost at most $(n-1) \cdot n^{3/4}/n = \Theta(n^{3/4})$ by matching the first coordinate, while one
can use $c$ to approximate $b$ with cost at most $1/n^{1/4}$ by matching the last $n-1$ coordinates, and since
there are $n-1$ columns equal to $b$, the overall total cost of using $c$ to approximate matrix $A$ is
$\Theta(n^{3/4})$.

\end{proof}

\end{document}